\newtcbox{\inlinebox}[1][]{box align=base,
	nobeforeafter,
	size=small,
	left=0pt,
	right=0pt,
	boxsep=2pt,
	#1}
\newcommand{\an}{\mathrm{an}}
\newcommand{\ant}{\mathrm{an}_{\bullet\!}}
\newcommand{\antv}{\mathrm{an}_{\bullet\!}^{\scriptscriptstyle V}}
\newcommand{\pat}{\mathrm{pa}_{\bullet\!}}
\newcommand{\deta}{\mathrm{de}_{\bullet\!}}
\newcommand{\starrightarrow}{{\ {*\!\!\!}\rightarrow\ }}
\newcommand{\tailedrightarrow}{{\ {\bullet\!\!\!}\rightarrow\ }}
\newcommand{\tailedleftarrow}{{\ \leftarrow}{\!\bullet}\ }
\newtheorem{corollary}{Corollary}
\newtheorem{theorem}{Theorem}
\newtheorem{proposition}{Proposition}
\newtheorem{definition}{Definition}
\newtheorem{example}{Example}
\newtheorem{lemma}{Lemma}
\newtheorem{remark}{Remark}
\chardef\bslash=`\\ 
\newcommand*{\addFileDependency}[1]{
	\typeout{(#1)}
	%
	%
	\@addtofilelist{#1}
	%
	\IfFileExists{#1}{}{\typeout{No file #1.}}
}\makeatother
\begin{document}

\title{Time-dependent mediators in survival analysis: Graphical representation of causal assumptions}

\author{SØREN WENGEL MOGENSEN$^{1,\ast}$, ODD O. AALEN$^2$\\ AND 
SUSANNE 
STROHMAIER$^3$\\[20pt] {\itshape
	$^1$ Department of Automatic Control,}\\ {\itshape Lund University, Lund, 
	Sweden}\\[10pt] %
		$^2$ {\itshape Oslo Centre for Biostatistics and Epidemiology,}\\ 
		{\itshape Department 
		of Biostatistics, University of Oslo,
			Oslo, Norway}\\[10pt]  $^3$ {\itshape Center for Public Health, 
			Department 
			of Epidemiology,} \\ {\itshape Medical University of Vienna, 
			Vienna, Austria}}

\maketitle

\let\thefootnote\relax\footnotetext{$^\ast$ 
soren.wengel\_mogensen{@}control.lth.se}

\begin{abstract}
{We study time-dependent mediators in survival analysis using a treatment separation approach due to \cite{didelez2019defining} and based on earlier work by \cite{robins2010alternative}. This approach avoids nested counterfactuals and cross-world assumptions which are otherwise common in mediation analysis. The causal model of treatment, mediators, covariates, confounders and outcome is represented by causal directed acyclic graphs (DAGs). However, the DAGs tend to be very complex when we have measurements at a large number of time points. We therefore suggest using so-called \emph{rolled graphs} in which a node represents an entire coordinate process instead of a single random variable, leading us to far simpler graphical representations.

The rolled graphs are not necessarily acyclic; they can be analyzed by $\delta$-separation which is the appropriate graphical separation criterion in this class of graphs and analogous to $d$-separation. In particular, $\delta$-separation is a graphical tool for evaluating if the conditions of the mediation analysis are met or if unmeasured confounders influence the estimated effects.

We also state a mediational g-formula. This is similar to the approach in \cite{vansteelandt2019mediation} although that paper has a different conceptual basis. Finally, we apply this framework to a statistical model based on a Cox model with an added treatment effect.}{survival analysis; mediation; causal inference; graphical models; local independence graphs}
\end{abstract}

\section{Introduction}
Mediation analysis is an important aspect of causal inference. The purpose is to understand how causal effects are mediated through different causal pathways. As an example, assume there is a certain medication that has been shown to have an effect on a disease; one might then be interested in how this effect operates. It has, for instance, been documented that several treatments for high blood pressure reduce the risk of developing heart disease. An interesting question is to which extent the reduced blood pressure is the only, or main, reason why these medications work. For some medical treatments there are indications that they work through several pathways. The estimation of direct and indirect effects may throw some light on this issue.

Mediation analysis in a survival setting has been a challenge. One issue is that a patient may survive in one counterfactual setting, and not in the other, or they may be censored in one, and not in the other. An additional aspect is the need for understanding the development over time. The relevant mediators will typically be stochastic processes and one may want to model mediation as such. A carefully developed contribution was given by \cite{vansteelandt2019mediation}. The authors construct a counterfactual comparison where a hypothetical intervention on the mediator will set it to the level that would have been seen if the exposure had been different. If a person dies the mediator is set to a level that would occur if death had been prevented. These are so-called nested counterfactuals and they require an untestable cross-world assumption. This approach is subject to some dispute, see, e.g., \cite{didelez2019defining} and \cite{stensrud2022separable}. 

A simpler and potentially more intuitive procedure was developed by \cite{didelez2019defining}, based on earlier work by \cite{robins2010alternative}. Their approach allows a more straightforward translation from subject-matter questions to estimands. The idea is to look at different components of the treatment; one component describes how the treatment affects the outcome through the mediator, while another describes the other effects of treatment. The procedure was applied in \cite{aalen2020time} to a setting where only the last value of the mediator was used at any given time. A comparison of the method of \cite{vansteelandt2019mediation} with the approach used in \cite{aalen2020time} is given by \cite{tanner2022methods}.

Most of this paper models mediator and covariate processes as discrete-time processes. This corresponds to the way they are actually measured, e.g., from clinical examination of patients. The formalization of causal assumptions that we use are in the tradition of time-discrete causal directed acyclic graphs (DAGs) \citep{pearl09}. The underlying processes, e.g., blood pressure, typically evolve in continuous time. Mediation analysis in time-continuous models can also be done, but would be more complex in many settings. We give an example to illustrate how the graphical framework we propose may also be used in continuous-time models, even though most of this paper focuses on discrete-time stochastic processes.

Mediation analysis is natural in a clinical trial setting. An example is the study of cholesterol treatments (statins) as discussed by \cite{strohmaier15}; the outcome being coronary heart disease. As shown in that paper there is a major direct effect indicating that statins may have an important influence through other pathways than cholesterol. In such applications, it is crucial to be aware of the time delays that certainly would influence the processes. The effect on the outcome of a blood pressure or cholesterol medication, say, would typically not be immediate, but rather delayed due to the biological processes that are involved; see e.g. Figure 3 of the paper \cite{sprint2015randomized} where there is an initial period when no difference can be observed between survival curves.

In the present paper we give a further development of \cite{didelez2019defining} and \cite{aalen2020time}, achieving among other things the following:

\begin{itemize}
    \item We extend the mediation framework such that mediation is represented by an entire mediation process up to the present time.
    \item We formulate our mediation result using Granger non-causality (local independence) and we extend $\delta$-separation graphs to represent the assumptions of the mediation analysis.
    \item We prove a new $\delta$-separation Markov property in Granger causal graphs of time series with contemporaneous effects.
    \item We show how certain unmeasured confounder processes can be introduced without destroying the validity of the analysis.
    \item We discuss decomposing the treatment along pathways.
    \item We give an example of mediation analysis in a continuous-time model, illustrating that this graphical framework is applicable in both discrete- and continuous-time models.
\end{itemize}

Additional results and proofs are in Sections \ref{app:granger}-\ref{ssec:continuousTimeExample} in the supplementary materials.

\section{A causal survival model}

This section first describes the basic modelling framework that we will use. It then describes the treatment separation approach and a set of assumptions that are sufficient for mediation analysis.

\subsection{The basic model}
\label{basic model} 

We consider a treatment (or exposure), $A$, which may take one of two values, $a$ or $a^{\ast}$. We are mostly interested in a survival setting, hence, the outcome is the time of occurrence, $T$, of some event. Let $t_{0}=0<t_{1}<\ldots<t_{k}<\ldots$ be an increasing sequence of time points and let $T>t$ be an indicator function which is equal to 1 if the inequality is fulfilled and zero otherwise; we also write this as a counting process $N(t)$ and this outcome process could also be a more general stochastic process. The outcome, $N$, is influenced by two types of observable processes, namely mediator processes that carry the effect of treatment forward in time, and covariate processes that are not influenced by treatment. Let $\{M_{k}; k=0, 1,\ldots\}$ be a sequence of mediator values and $\{C_{k}; k=0, 1,\ldots\}$ a sequence of covariate values. Let $\overline{M}_{k}$ denote the vector of all $M_{j}$ for $j \leq k$, and similarly for the covariate process.  The values of mediator and covariate processes, $M_k$ and $C_k$, are only meaningful as long as the individual survives, that is, when $T > t_k$. We will assume that with probability 1 these processes assume some symbolic value indicating `not available' whenever $T \leq t_k$.

We assume that the data is time-ordered as
\begin{align*}
A, M_{0}, C_{0}, T>t_1,M_{1}, C_{1},  T>t_2, \ldots, T>t_k, M_{k}, C_{k}, T>t
\end{align*}
such that variables that are earlier in the above order are realised before variables that are later in the above order and $t_{k} < t \leq t_{k+1}$. We make the following assumption throughout the paper.

\begin{itemize}
\item {General causal assumption: for all $t$, the variables up to time $t$ are represented by a causal DAG as defined in \cite{pearl09} which is consistent with the above temporal order.} 
\end{itemize}

 An illustration of the DAG-model is given in Figure \ref{partial DAG} where we have included relevant variables up to time $t_2$. We assume at the initial stage that there are no unmeasured confounders, and we include such confounders in a later section.

 We assume that mediators and covariates are measured at fixed times, common for all individuals. The first mediator measurement is made at time 0 when the treatment starts. But, clearly, we cannot say anything about mediation before the second measurement is done at time $t_1$. Since the underlying framework is a counting process, the martingale assumptions require that we cannot go back in time. Hence analysis of mediation is done from time $t_1$ and forwards. It follows that mediation analysis with time-discrete measurements may tend to somewhat underestimate the real indirect effect through the mediator. A better estimate would be derived if the mediators and other variables were measured more frequently. However, a major area of application that we have in mind are clinical trials. In such studies the variables are measured quite frequently, so the discreteness in time would be a minor problem.

\begin{figure}
\centering
        \begin{subfigure}[t]{0.45\textwidth}
        \vskip 0pt
                \centering
                \subcaption{}
                \includegraphics[width=\linewidth]{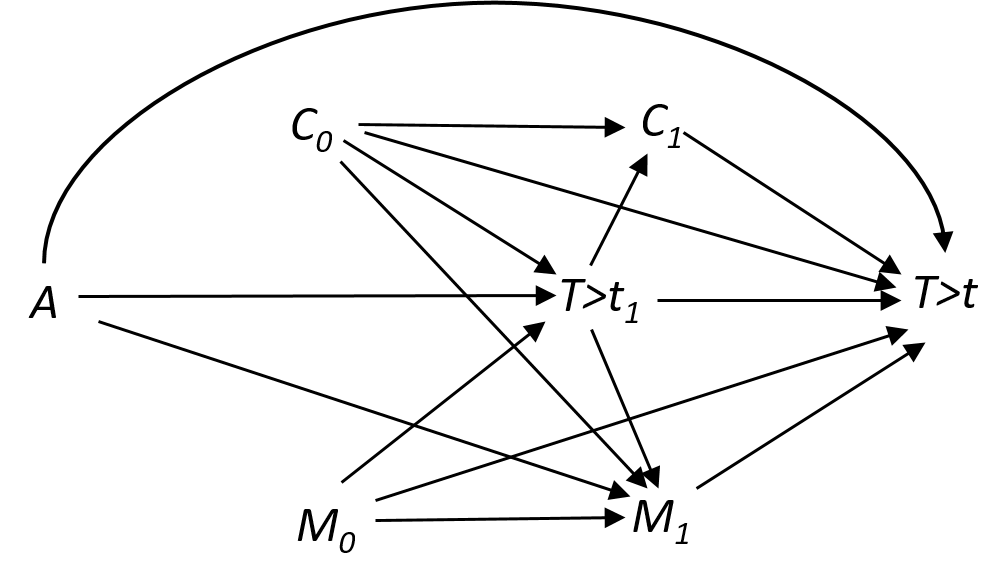}
                \label{partial DAG}
        \end{subfigure}\hfill
        \begin{subfigure}[t]{0.45\textwidth}
        \vskip 0pt
                \centering
                \subcaption{}
                \includegraphics[width=\linewidth]{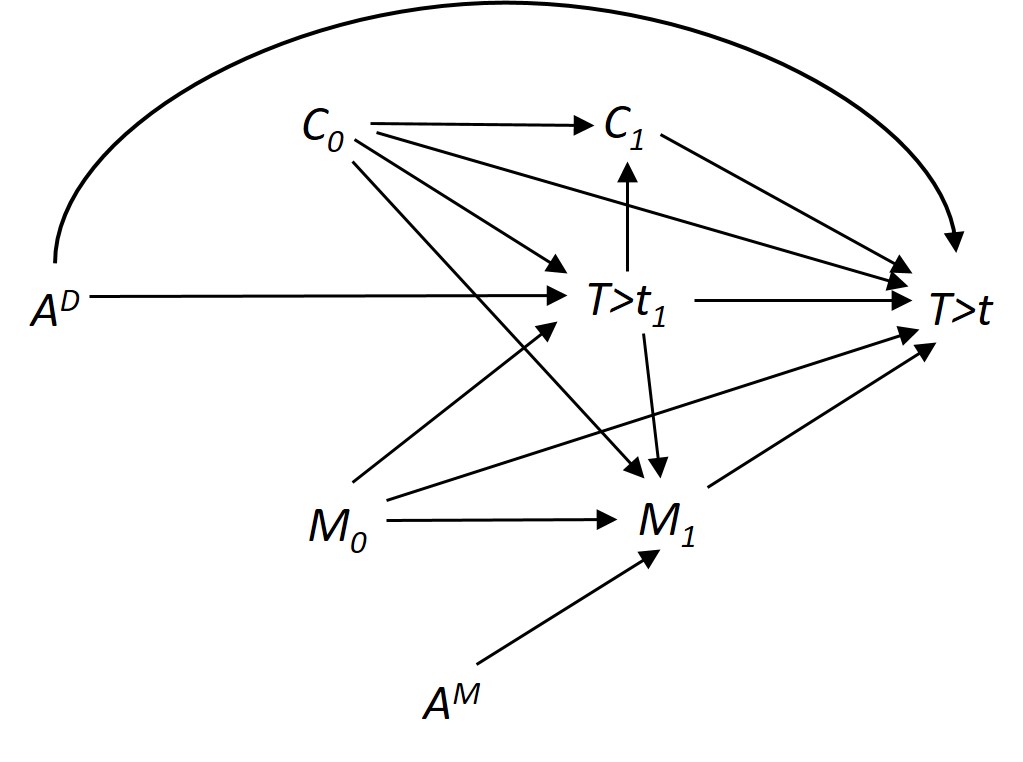}
                \label{DAG_confounders1}
        \end{subfigure}\hfill
        \caption{(\subref{partial DAG}) Graph for observed data: Causal DAG showing treatment $A$, mediators  $M_0$ and $M_1$, covariates $C_0$ and $C_1$ and the outcome $T>t$ for 
$t_{1} < t \leq t_{2}$. (\subref{DAG_confounders1}) Graph under the intervention $do(A^D = a, A^M = a^{\ast})$, see Subsection \ref{ssec:treatmentseparation}. Causal DAG showing treatment components $A^M$, $A^D$, mediators  $M_0$ and $M_1$, covariates $C_0$ and $C_1$ and the outcome $T>t$ for 
$t_{1} < t \leq t_{2}$.}
\end{figure}

\subsection{A treatment separation approach to mediation}
\label{ssec:treatmentseparation}

\cite{didelez2019defining} presented a new approach to mediation in survival analysis as a further development of work by \cite{robins2010alternative}. Instead of using natural direct and indirect effects, or variations of these, the emphasis is on understanding mediation through different aspects of treatment corresponding to different pathways. One imagines that the treatment, $A$, can be separated into two components; one, $A^M$, that operates through the mediator, and one, $A^D$, that operates `outside' the mediator, that is, directly. In observational data, there is a functional dependency between these variables as $A = A^D = A^M$ with probability 1. However, in some applications one may imagine hypothetical interventions which could break this functional dependence. When considering the effect of statins on heart disease, for example, one may imagine that a part of the treatment effect goes through a reduction in the cholesterol level, while another part is due to, e.g., changes in the risk of inflammation. As we will see, this approach avoids the nested counterfactuals and cross-world assumption of classical mediation theory.

Applications of the treatment separation approach have been given by \cite{aalen2020time} and it has been further extended to competing event settings \citep{stensrud2021generalized,stensrud2022separable} and truncation by death settings \citep{stensrud2022conditional}. The latter three references show how these ideas can clarify competing risks analysis by introducing a new estimand. Under the appropriate assumptions that will be outlined in later sections, the theoretical construct of treatment components allows us to identify direct and mediated effects, even though $A = A^D = A^M$ in the data we actually observe.

\subsection{Causal assumptions for mediation analysis}
\label{causmed}

Note that the DAG in Figure \ref{partial DAG} describes the observed data. We are now going a step further, describing the assumed relationship between the two treatment components (denoted $A^D$ and $A^M$)  and the observed data. We assume the following property \citep{didelez2019defining}:

\begin{itemize}
\item Property P1: The interventional distributions are such that for all $j$ :

\hspace{0.2cm} $ P( \; \cdot \; ; do(A^D = j, A^M = j)) = P( \; \cdot \; ; do(A = j)) $
\end{itemize}

Furthermore, we shall make the following causal assumptions for the mediation model. A0 should hold in the observational distribution while A1, A2 and A3 should hold in the interventional distribution corresponding to $do(A^D = a, A^M = a^{\ast})$. The point of making these assumptions is that they enable the important mediational g-formula in Section \ref{totpr}. 

\begin{itemize}

\item {A0: The treatment $A$ is randomised at time zero.}

\item {A1: For each time $t_k$ the mediator
$M_{k}$ is independent of the treatment component $A^D$  conditional on $T > t_k$,  $A^M$ and previous mediator and covariate values:

\hspace{0.2cm} $A^D \mathrel{\text{\scalebox{1.07}
{$\perp\mkern-10mu\perp$}}}M_{k} \thinspace \mid  \thinspace (T > t_k, A^M=a^{\ast}, \overline{M}_{k-1}, \overline{C}_{k-1})$}

\item {A2: For each $k$ and for each time $t$ satisfying $t_k < t \leq t_{k+1}$, the event $T > t$ is independent of the treatment component $A^M$  conditional on $T > t_k$, $A^D$ and previous mediator and covariate values:

\hspace{0cm} $A^M
\mathrel{\text{\scalebox{1.07}{$\perp\mkern-10mu\perp$}}}T > t \thinspace \mid  \thinspace ( T > t_k, A^D=a, \overline{M}_k,\overline{C}_{k}))$}

\item {A3: For each time $t_k$ the covariate
$C_{k}$ is independent of the treatment component $A^M$  conditional on $T > t_k$,  $A^D$ and previous mediator and covariate values:

\hspace{0.2cm} $A^M \mathrel{\text{\scalebox{1.07}
{$\perp\mkern-10mu\perp$}}}C_{k} \thinspace \mid  \thinspace (T > t_k, A^D=a, \overline{M}_{k}, \overline{C}_{k-1})$}

\end{itemize}

We can think of $A$ as occurring at time $t = 0$ and we assume that there are no contemporaneous effects from $A$, i.e., in the underlying causal graph there are no edges from $A$ to a variable at time $t = 0$. Note that Assumption A2 is written for $T>t$ where $t$ is an arbitrary time in the relevant interval and not necessarily one of the time points $t_k$. Such times $t$ can be included without any problem and this is done repeatedly below. 

Assumption A0 is convenient; however, it may be relaxed to allow observed confounders. Assumptions A1 and A2 are slight modifications of assumptions in \cite{didelez2019defining} and are made to ensure the validity of the treatment separation discussed above. For a more detailed discussion of what they mean, see \cite{didelez2019defining} and \cite{aalen2020time}. Assumption A3 is similar to assumption A1 and included here in order to give suitable restrictions on how the covariate process $C$ relates to the other processes as well as to the treatment.

One can use $d$-separation in the underlying causal DAG to argue for the validity of these assumptions as $d$-separation implies certain conditional independencies. For instance in Figure \ref{DAG_confounders1}, assumption A1 is implied by the graph since all paths from $A^D$ to $M_1$ are blocked by the conditioning variables in Assumption A1. However, it is clear that when we have many observed time points, the relevant DAGs are complex, and possibly rather unreadable. For this reason, we suggest in the next section a graphical framework (a \emph{rolled} graph) in which a node represents an entire coordinate process instead of a single random variable, leading us to far simpler graphical representations. Other graphical representations could also be used, e.g., DAGs with two lagged variables of each process, one lag representing the past of the process, and one lag representing the present. However, we use the rolled graphs as they also extend naturally to continuous-time models and connect nicely with known $\delta$-separation theory for local independence and Granger non-causality graphs.

\section{A stochastic process viewpoint}
\label{stochpro}

Local independence, or Granger non-causality in discrete-time models, describes how the evolution of a stochastic process depends on other stochastic processes. In the rolled graphs, we will use it analogously to how conditional independence is used in DAGs. In comparison with conditional independence, local independence has the advantage that there is a direction in the dependence relationship. The concept is often defined for time-continuous multivariate stochastic processes, using the Doob-Meyer decomposition which represents a submartingale as the sum of an increasing compensator and a martingale. A standard example is the decomposition of a counting process into the integral of the intensity process (the compensator) and a martingale. Local independence and local independence graphs were defined and studied by
\cite{schweder70,aalen87,didelez07,didelez08,aalen08,mogensen2018causal,mogensen2022graphical}, among others, for example in marked point processes \citep{didelez08} and multivariate Gaussian processes \citep{mogensen2022graphical}. An interesting study of graphical criteria in local independence graphs for nonparametric identification of causal effects is given by \cite{roysland2022graphical}. A stopped version of the local independence graph may be used in settings where observation is stopped, e.g., by death or censoring of an individual; see \cite{didelez08}.

In this section, we define \emph{Granger non-causality}, a time-discrete 
version of local independence, and we use this to lay out how cyclic graphs may 
represent the assumptions of our mediation analysis. In short, the causal 
assumptions we make are on an underlying causal DAG-model as in the previous 
section. From this DAG we construct a `rolled', and possibly cyclic, graph in 
which each node represents an entire coordinate process (Subsection 
\ref{relationship}). We show that the conditional independencies represented by 
the DAG implies local independencies that can be read off from the rolled graph 
(Subsection \ref{relationship}) using $\delta$-separation. Finally, we show how 
this can be used to represent mediation analysis assumptions, also in the 
presence of unobserved confounding (Section \ref{unmeasured confounders}).

\subsection{The time-discrete setting}
\label{time-discrete}

The distinction between data observed in continuous time and discrete time is ubiquitous in survival analysis. In causal modelling, it would seem most appropriate to consider covariates, mediators, outcomes and confounders to be evolving in continuous time. However, most often they would only be measured at discrete points in time. For instance, blood pressure is typically measured at visits to the doctor while clearly the blood pressure changes between visits. This is a limitation if the true model is continuous-time and the sampling frequency is not sufficiently high. We will, as is common, assume that the sampling frequency is sufficiently high and build a causal model in discrete time.

Time-discrete local independence and related graphical modelling have been discussed by \cite{eichler2010granger,eichler2012causal}, \cite{eichler2013causal}, \cite{didelez08} and the supplementary material of \cite{mogensen2020markov}, often using the concept of Granger non-causality which is a time-discrete analogue to local independence. Below we define Granger non-causality in discrete-time stochastic processes, that is, in time series, and in our case we need to allow the inclusion of baseline variables.

Let $(Z_t)_{t\geq 0}$ be a multivariate time series such that $Z_t$ is a random vector with entries indexed by a finite set $V$. The baseline variables are a random vector denoted by $B$ and we let $W$ be an index set of the baseline variables. We let $X = \{ X_t\}_{t\geq 0}$ be the multivariate stochastic process such that $X_0$ is the concatenation of $Z_0$ and $B$ and such that $X_t = Z_t$ for $t>0$. We say that $X$ is a time series with baseline variables and index sets $(V,W)$. Note that, in the presence of baseline variables, random vectors $X_0$ and $X_t$, $t>0$, are of different length. We define $\overline{X}_t^A= \{X_t^A;s\leq t\}$ for any subset $A$ of $V\cup W$ where $X_t^A$ is the subvector of $X_t$ indexed by $A$. We use the following extended definition of Granger non-causality which combines stochastic processes and random variables.

\begin{definition}[Granger non-causality with baseline variables]
Let $X$ be a time series with baseline variables and index sets $(V,W)$. Let $A$, $B$ and $C$ be disjoint subsets of $V \cup W$ such that $B\subseteq V$. We say that $X^A$ is Granger non-causal for $X^B$ given $X^{C}$ if for all $t\geq 1$ we have that $\overline{X}^A_{t-1}$ and $X_t^B$ are conditionally independent given $\overline{X}^{B\cup C}_{t-1}$. We say that $X^A$ is Granger non-causal for $X^B$ given $X^C$ until time ${t'}$ if this holds for all $t=1,2\ldots, {t'}$. 
\label{def:grangercausality}
\end{definition}

When $X^A$ is Granger non-causal for $X^B$ given $X^C$, we shall also say that $X^B$ is locally independent of $X^A$ given $X^C$ and denote it by $X^A \not\rightarrow X^B \mid X^C$ or $A \not\rightarrow B \mid C$. We provide an equivalent definition of Granger non-causality in Appendix \ref{app:granger}. The alternative definition is also widespread in the literature, though the above is more closely aligned with the notion of graphical separation that we will use.

We will allow contemporaneous effects in the underlying causal DAG and one may also consider contemporaneous (in)dependencies in addition to the above notion of independence, see, e.g., \cite{eichler2010granger}. However, for our purposes the above definition suffices.

\subsection{Local independence and causality}

A general way to define causal concepts is based on the idea of interventions, following for instance \cite{pearl09}. The connection between local independence and intervention-based causality has been studied by \cite{roysland12}, \cite{didelez2015causal} and \cite{mogensen2020causal}. For a discussion of interventions in the setting of Granger non-causality, see \citet{eichler2010granger,eichler2012causal}. These approaches all employ the idea that a \emph{causal} model should satisfy a certain \emph{structural stability} under interventions. One imagines that a system can be described as a set of modules, and that  intervention takes place on some of the modules while the modules not intervened upon and the underlying structure remain stable. Intervention could, e.g., take place on an intensity process of a counting process, or one could intervene on one coordinate process in a time series. The remaining structure stays unchanged, however, the exact meaning of this depends on the model class.

In this paper, the causal interpretation of the local independence graphs of discrete-time models is particularly simple as we make the causal assumption on the variable level by interpreting the underlying DAG causally. The local independence graphs are therefore to be understood only as convenient graphical representations. In Appendix \ref{ssec:continuousTimeExample} in the supplementary materials, we use a continuous-time model and this requires a different causal interpretation. We describe this in the same appendix.

\subsection{Local independence and $\delta$-separation}
\label{delta-separation}

Local independence may be represented graphically using the very useful concept of $\delta$-separation; see \cite{didelez08} for a practical example. $\delta$-separation is a graphical criterion which implies local independence using a so-called \emph{global Markov property}. This is analogous to how $d$-separation in directed acyclic graphs implies conditional independence \citep{pearl09}. While $d$-separation and conditional independence are \emph{symmetric} (e.g., $X \mathrel{\text{\scalebox{1.07}
{$\perp\mkern-10mu\perp$}}} Y \thinspace \mid  \thinspace Z$ implies $Y \mathrel{\text{\scalebox{1.07}
{$\perp\mkern-10mu\perp$}}} X \thinspace \mid  \thinspace Z$) this is not the case for $\delta$-separation and local independence.

Markov property results for stationary time series are given by \cite{eichler2007granger}. We shall need results for non-stationary time-discrete processes and these are described in Section \ref{relationship} which extends results in the supplementary material of \cite{mogensen2020markov} to allow baseline variables and contemporaneous effects. These results imply that we can use $\delta$-separation to assess if unmeasured confounders can be ignored in the mediation analysis.
 
Before introducing $\delta$-separation, we will define various graph-theoretic notions. Formally, a graph $D = (V,E)$ consists of a set of \emph{nodes}, $V$, and a set of edges, $E$. Each edge is between a pair of nodes (not necessarily distinct). In this paper, we will mostly consider \emph{directed graphs} (DGs) in which every edge is \emph{directed}, that is, $i\rightarrow j$ or $i \leftarrow j$, $i,j\in V$. We assume that between any pair, $(i,j)$, of nodes there is a most one edge pointing from $i$ to $j$, $i\rightarrow j$, however, both $i\rightarrow j$ and $i \leftarrow j$ may be present in the graph. A \emph{walk}, $\pi$, is an alternating sequence of nodes and edges, $\langle i_0, e_1 , i_1, e_2, \ldots, e_n, i_n\rangle$ such that $i_k \in V$, $e_l \in E$, and $e_l$ is between $i_{l-1}$ and $i_l$. A \emph{path} is a walk such that no node is repeated. For $0<k<n$, we say that $i_k$ is a \emph{collider} on $\pi$ if $k_i$ and $e_{k+1}$ both have \emph{heads} at $i_k$, i.e., $\rightarrow i_k\leftarrow$, and otherwise we say that $i_k$ is a \emph{noncollider}. A \emph{directed path} is a path such that all edges point in the same direction. A \emph{directed cycle} is a self-loop, $i \rightarrow i$, or a directed path from $i$ to $j$ along with the directed edge $j\rightarrow i$. A \emph{directed acyclic graph} (DAG) is a DG with no directed cycles and we say that a DG with no directed cycles is \emph{acyclic}. If $i \rightarrow j$, then we say that $i$ is a \emph{parent} of $j$ and that $j$ is a \emph{child} of $i$. If there is a directed path from $i$ to $j$, and $i \neq j$, then we say that $i$ is an \emph{ancestor} of $j$. We denote the set of ancestors of a node by $\an(i)$ and we define $\an(A) = \cup_{i\in A} \an(i)$ for $A\subseteq V$. Finally, we let $D_A = (A, E_A)$ denote the \emph{subgraph induced by $A$} where $E_A$ is the subset of edges in $E$ which are between two nodes in $A$.

In general each node in the node set of a DG will represent a coordinate process or a baseline variable. For $B \subseteq V$, we define an auxiliary graph $D^B$ by deleting all
directed edges starting from $B$, that is, edges $i\rightarrow j$ such that 
$i\in B$.

\begin{definition}[$\delta$-separation, \cite{didelez08}] Let $A$, $B$ and $C$ be disjoint subsets of $V$. We say that $B$ is $\delta$-separated from $A$ given $C$ if every path between $i \in A$ and $j \in B$ in the graph $D^B$ either contains a noncollider $k$ such that $k \in C$ or a collider $k$ such that $k \notin \an_D(C) \cup C$. We denote $\delta$-separation by $A \not\rightarrow_\delta B \mid C$.
\label{def:deltaSep}
\end{definition}

\cite{mogensen2020markov} introduced the concept of $\mu$-separation which is a generalization of $\delta$-separation.
We will often drop the subscript $D$ in $\an_D(C)$ and we let $\an^\texttt{+}(C)$ denote $\an(C) \cup C$. For completeness, we also state the classical definition of $d$-separation. The concept of $\delta$-separation is clearly similar to $d$-separation and in the next section we give an intuitive explanation of why edges are removed to obtain $D^B$ in the definition of $\delta$-separation which is different from $d$-separation. In this paper, $\delta$-separation is applied to graphs that may be cyclic while $d$-separation is only applied to acyclic graphs.

\begin{definition}[$d$-separation, \cite{pearl09}]
    Let $A$, $B$ and $C$ be disjoint subsets of $V$. We say that $A$ and $B$ are $d$-separated given $C$ if every path between $i \in A$ and $j \in B$ in $D$ either contains a noncollider $k$ such that $k \in C$ or a collider $k$ such that $k \notin \an(C) \cup C$.
\end{definition}

\subsection{Relationship between $\delta$-separation and $d$-separation graphs}
\label{relationship}

When we use graphs to represent a time series, nodes can represent either random variables or entire processes. In this subsection, we describe the connection between these two representations. As above, we consider a time series, $X$, with baseline variables and index sets $(V,W)$ where $V$ is an index set of the coordinate processes and $W$ is an index set of baseline variables. We assume throughout that $V$ and $W$ are disjoint. We represent this process using a DG, $D = (V\cup W,E)$. In this representation, each node $i\in V\cup W$ represents either an entire stochastic process, $X^i$, $i\in W$, or a baseline variable, $i\in W$, and we will say that such a DG is a \emph{rolled graph}. We will also use DAGs in which each node, $\nu_s^i$, represents a single random variable in the time series, $X_s^i$, $i\in V\cup W$, and we will say that such a DAG is an \emph{unrolled graph}. We define \emph{rolling} and \emph{unrolling} operations to translate between these two graphical representations and we will later extend these definitions to also allow contemporaneous effects. Several authors have used similar notions of rolled and unrolled graphical representations of stochastic processes, see, e.g., \cite{danks2013, sokol2014, danks2016}. The term \emph{unfolded graph} has also been used in the literature.

\begin{figure}
	    \newcommand\xx{2}
	    \newcommand\yy{2}
\begin{minipage}{0.2\textwidth}
				\begin{subfigure}{\textwidth}
					\centering
					\begin{tikzpicture}[scale=0.7]
					\tikzset{vertex/.style = {shape=circle,draw,minimum 
					size=1.5em, 
							inner 
							sep = 0pt}}
					\tikzset{edge/.style = {->,> = latex', thick}}
					\tikzset{edgebi/.style = {<->,> = latex', thick}}
					\tikzset{every loop/.style={min distance=8mm, looseness=5}}
					\tikzset{vertexFac/.style = {shape=rectangle,draw,minimum 
							size=1.5em, 
							inner sep = 0pt}}
					
					\node[vertex] (A) at  (0,0) {$P$}; 
					\node[vertex] (C) at  (0,-2) {$Q$}; 
					\node[vertex] (M) at  (2,0) {$R$}; 
					\node[vertex] (N) at  (2,-2) {$S$}; 
					
					\node (a) at (-0.5,1.75) {\textbf{A}};
					
					\draw[edge] (A) to (N);
					\draw[edge] (C) to (M);
					\draw[edge] (N) to (C);
					\draw[edge] (M) to (N);
					
					\end{tikzpicture}
				\end{subfigure}
    \end{minipage}\hfill\vrule\hspace{.04\textwidth}\hfill
    \begin{minipage}{.3\textwidth}
				\begin{subfigure}{0.3\textwidth}
					\centering
					\begin{tikzpicture}[scale=0.7]
					\tikzset{vertex/.style = {shape=circle,draw,minimum 
					size=1.5em, 
							inner 
							sep = 0pt}}
					\tikzset{edge/.style = {->,> = latex', thick}}
					\tikzset{edgebi/.style = {<->,> = latex', thick}}
					\tikzset{every loop/.style={min distance=8mm, looseness=5}}
					\tikzset{vertexFac/.style = {shape=rectangle,draw,minimum 
							size=1.5em, 
							inner sep = 0pt}}
					
					\node[vertex] (m0) at  (0,0) {$\nu_0^R$}; 
					\node[vertex] (c0) at  (0,-1*\yy) {$\nu_0^Q$}; 
					\node[vertex] (n0) at  (0,-2*\yy) {$\nu_0^S$}; 
					\node[vertex] (a0) at  (0,-3*\yy) {$\nu_0^P$}; 
					
					\node[vertex] (m1) at  (\xx,0) {$\nu_1^R$};
					\node[vertex] (c1) at  (\xx,-1*\yy) {$\nu_1^Q$};
					\node[vertex] (n1) at  (\xx,-2*\yy) {$\nu_1^S$};
					
					\node[vertex] (m2) at  (2*\xx,0) {$\nu_2^R$};
					\node[vertex] (c2) at  (2*\xx,-1*\yy) {$\nu_2^Q$};
					\node[vertex] (n2) at  (2*\xx,-2*\yy) {$\nu_2^S$};

					\node at (-0.5,1.75) {\textbf{B}};
					
					
					\draw[edge] (a0) to (n1);
					\draw[edge] (a0) to (n2);
					
					\draw[edge] (n0) to (n1);
					\draw[edge] (n1) to (n2);
					
					\draw[edge] (c0) to (c1);
					\draw[edge] (c1) to (c2);
					\draw[edge] (c0) to (m1);
					\draw[edge] (c0) to (m2);
					\draw[edge] (c1) to (m2);
					\draw[edge] (n0) to (c1);
					\draw[edge] (n0) to (c2);
					\draw[edge] (n1) to (c2);
					
					\draw[edge] (m0) to (m1);
					\draw[edge] (m1) to (m2);
					\draw[edge] (m0) to (n1);
					\draw[edge, bend right = 30] (m0) to (n2);
					\draw[edge] (m1) to (n2);
					
					\end{tikzpicture}
				\end{subfigure}
    \end{minipage}\hfill\vrule\hspace{.04\textwidth}\hfill
    \begin{minipage}{.3\textwidth}
    				\begin{subfigure}{0.3\textwidth}
					\centering
					\begin{tikzpicture}[scale=0.7]
					\tikzset{vertex/.style = {shape=circle,draw,minimum 
					size=1.5em, 
							inner 
							sep = 0pt}}
					\tikzset{edge/.style = {->,> = latex', thick}}
					\tikzset{edgebi/.style = {<->,> = latex', thick}}
					\tikzset{every loop/.style={min distance=8mm, looseness=5}}
					\tikzset{vertexFac/.style = {shape=rectangle,draw,minimum 
							size=1.5em, 
							inner sep = 0pt}}
					
					\node[vertex] (m0) at  (0,0) {$\nu_0^R$}; 
					\node[vertex] (c0) at  (0,-1*\yy) {$\nu_0^Q$}; 
					\node[vertex] (n0) at  (0,-2*\yy) {$\nu_0^S$}; 
					\node[vertex] (a0) at  (0,-3*\yy) {$\nu_0^P$}; 
					
					\node[vertex] (m1) at  (\xx,0) {$\nu_1^R$};
					\node[vertex] (c1) at  (\xx,-1*\yy) {$\nu_1^Q$};
					\node[vertex] (n1) at  (\xx,-2*\yy) {$\nu_1^S$};
					
					\node[vertex] (m2) at  (2*\xx,0) {$\nu_2^R$};
					\node[vertex] (c2) at  (2*\xx,-1*\yy) {$\nu_2^Q$};
					\node[vertex] (n2) at  (2*\xx,-2*\yy) {$\nu_2^S$};

					\node at (-0.5,1.75) {\textbf{C}};
					
					
					\draw[edge] (a0) to (n1);
					\draw[edge] (a0) to (n2);
					
					\draw[edge] (n0) to (n1);
					\draw[edge] (n1) to (n2);
					
					\draw[edge] (c0) to (c1);
					\draw[edge] (c1) to (c2);
					\draw[edge] (c0) to (m2);
					\draw[edge] (c1) to (m2);
					\draw[edge] (n0) to (c2);
					\draw[edge] (n1) to (c2);
					
					\draw[edge] (m0) to (m1);
					\draw[edge] (m1) to (m2);
					\draw[edge] (m0) to (n1);
					\draw[edge, bend right = 30] (m0) to (n2);
					\draw[edge] (m1) to (n2);
				\end{tikzpicture}
				\end{subfigure}
    \end{minipage}\vspace{.5cm}
    \begin{minipage}{.45\textwidth}
				\begin{subfigure}{\textwidth}
					\centering
					\begin{tikzpicture}[scale=0.7]
					\tikzset{vertex/.style = {shape=circle,draw,minimum 
					size=1.5em, 
							inner 
							sep = 0pt}}
					\tikzset{edge/.style = {->,> = latex', thick}}
					\tikzset{tedge/.style = {Circle->,> = latex', thick, black!50!green}}
					\tikzset{edgebi/.style = {<->,> = latex', thick}}
					\tikzset{every loop/.style={min distance=8mm, looseness=5}}
					\tikzset{vertexFac/.style = {shape=rectangle,draw,minimum 
							size=1.5em, 
							inner sep = 0pt}}
					
					\node[vertex] (A) at  (0,0) {$P$}; 
					\node[vertex] (C) at  (0,-2) {$Q$}; 
					\node[vertex] (M) at  (2,0) {$R$}; 
					\node[vertex] (N) at  (2,-2) {$S$}; 
					
					\node at (-0.5,1.75) {\textbf{D}};
					
					
					\draw[edge] (A) to (N);
					\draw[tedge] (C) to (M);
					\draw[tedge] (N) to (C);
					\draw[edge] (M) to (N);
					
					\end{tikzpicture}
				\end{subfigure}
\end{minipage}\hfill\vrule\hspace{.04\textwidth}\hfill
    \begin{minipage}{.45\textwidth}
    				\begin{subfigure}{\textwidth}
					\centering
					\begin{tikzpicture}[scale=0.7]
					\tikzset{vertex/.style = {shape=circle,draw,minimum 
					size=1.5em, 
							inner 
							sep = 0pt}}
					\tikzset{edge/.style = {->,> = latex', thick}}
					\tikzset{edgebi/.style = {<->,> = latex', thick}}
					\tikzset{every loop/.style={min distance=8mm, looseness=5}}
					\tikzset{vertexFac/.style = {shape=rectangle,draw,minimum 
							size=1.5em, 
							inner sep = 0pt}}
					
					\node[vertex] (m0) at  (0,0) {$\nu_0^R$}; 
					\node[vertex] (c0) at  (0,-1*\yy) {$\nu_0^Q$}; 
					\node[vertex] (n0) at  (0,-2*\yy) {$\nu_0^S$}; 
					\node[vertex] (a0) at  (0,-3*\yy) {$\nu_0^P$}; 
					
					\node[vertex] (m1) at  (\xx,0) {$\nu_1^R$};
					\node[vertex] (c1) at  (\xx,-1*\yy) {$\nu_1^Q$};
					\node[vertex] (n1) at  (\xx,-2*\yy) {$\nu_1^S$};
					
					\node[vertex] (m2) at  (2*\xx,0) {$\nu_2^R$};
					\node[vertex] (c2) at  (2*\xx,-1*\yy) {$\nu_2^Q$};
					\node[vertex] (n2) at  (2*\xx,-2*\yy) {$\nu_2^S$};

					\node at (-0.5,1.75) {\textbf{E}};
					
					
					\draw[edge] (a0) to (n1);
					\draw[edge] (a0) to (n2);
					
					\draw[edge] (n0) to (n1);
					\draw[edge] (n1) to (n2);
					
					\draw[edge] (c0) to (c1);
					\draw[edge] (c1) to (c2);
					\draw[edge] (c0) to (m1);
					\draw[edge] (c0) to (m2);
					\draw[edge] (c1) to (m2);
					\draw[edge] (n0) to (c1);
					\draw[edge] (n0) to (c2);
					\draw[edge] (n1) to (c2);
					
					\draw[edge] (m0) to (m1);
					\draw[edge] (m1) to (m2);
					\draw[edge] (m0) to (n1);
					\draw[edge, bend right = 30] (m0) to (n2);
					\draw[edge] (m1) to (n2);

					\draw[edge] (n0) to (c0);
					\draw[edge] (c0) to (m0);
					\draw[edge] (n1) to (c1);
					\draw[edge] (c1) to (m1);
					\draw[edge] (n2) to (c2);
					\draw[edge] (c2) to (m2);
				\end{tikzpicture}
				\end{subfigure}
    \end{minipage}
    				\caption{\label{fig:roll} \inlinebox{\textbf{A}} An example of a DG, $D$, on nodes $V \cup W = \{ P,Q,R,S\}$ such that $W = \{P\}$ is the set of nodes corresponding to baseline variables. \inlinebox{\textbf{B}} A DAG which is the unrolled version of $D$ on two lags. \inlinebox{\textbf{C}} Another DAG which is a subgraph of graph \textbf{B} (edges $\nu_0^Q \rightarrow \nu_1^R$ and $\nu_0^S \rightarrow \nu_1^Q$ are present in one and absent in the other). One obtains graph \textbf{A} when rolling either graph \textbf{B} or graph \textbf{C} which illustrates that the rolling operation is not injective without further assumptions on the DAGs. Edges $\nu_0^F \rightarrow \nu_2^F$, $F\in \{R,Q,S \}$ are omitted in graphs \textbf{B}, \textbf{C} and \textbf{E}. \inlinebox{\textbf{D}} An extended local independence graph with contemporaneous effects (see Section \ref{deltasepContemp}). \inlinebox{\textbf{E}} Unrolled version of \textbf{D}. In \textbf{A} and \textbf{D}, $R$ is $\delta$-separated from $S$ by $Q$, corresponding to the fact that there is no $d$-connecting walk between $\nu_2^R$ and $\{\nu_0^S,\nu_1^S\}$ given $\{\nu_0^Q,\nu_1^Q,\nu_0^R,\nu_1^R\} $ in \textbf{B}. On the other hand, there is a $d$-connecting walk between $\nu_2^R$ and $\{\nu_0^S,\nu_1^S\}$ given $\{\nu_0^Q,\nu_1^Q,\nu_0^R,\nu_1^R\} $ in \textbf{E}. This illustrates that Proposition \ref{prop:markov} is not sufficient when also allowing contemporaneous effects.}
\end{figure}
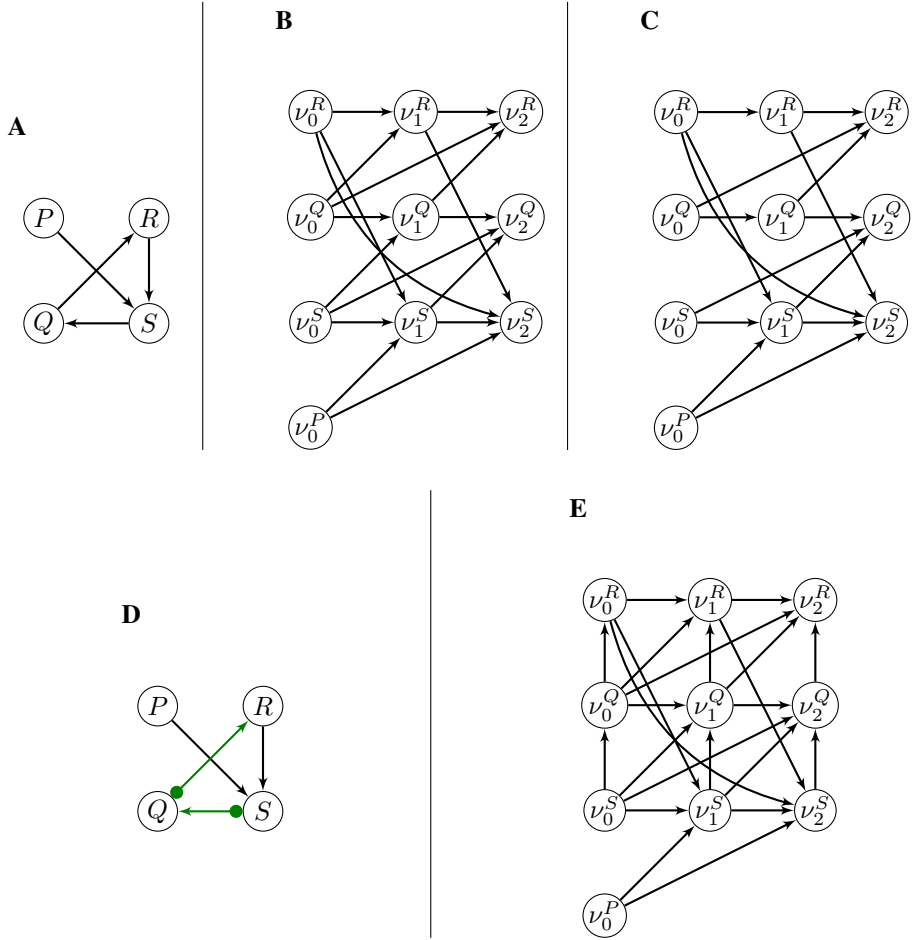

\begin{definition}[Unrolling]
\label{def:unrolling}
    Let $D$ be a directed graph on nodes $V \cup W$ such that the nodes in $W$ correspond to baseline nodes. The \emph{unrolled version} of $D$ on ${t'}$ lags, ${t'}\geq 1$, $D_{t'}$, is the DAG on nodes $ V^{t'} \cup W^{t'}$, 
    
    $$V^{t'} = \cup_{i\in V} \{ \nu_0^i, \ldots, \nu_{t'}^i \}, \ \ \  W^{t'} = \{\nu_0^i, i\in W \}$$
    
    \noindent such that $\nu_s^i \rightarrow \nu_t^j$ in $D_{t'}$ if $s<t$ and $i\rightarrow j$ in $D$.
\end{definition}

From the above definition, we see that when applying the unrolling operation, nodes corresponding to baseline variables are not `unrolled'. We also define a transformation in the opposite direction to obtain a `rolled' graph from a DAG. The above definition applies to any DG, however, for simplicity we will only formulate the reverse operation for DAGs with a certain structure.

\begin{definition}[Rolling]
\label{def:rolling}
    Let $D_{t'}$ be a DAG on nodes $V^{t'} \cup W^{t'}$ such that $V^{t'} = \cup_{i\in V} \{ \nu_0^i, \ldots, \nu_{t'}^i \}$ and $W^{t'} = \{ \nu_0^i, i\in W \}$ for disjoint sets $V$ and $W$. The \emph{rolled version} of $D_{t'}$ is the directed graph, $D$, on nodes $V \cup W$ such that for $i,j \in V \cup W$ we have that $i\rightarrow j$ if there exists $s<t$ such that $\nu_s^i \rightarrow \nu_t^j$ in $D_{t'}$.
\end{definition}

The operation of rolling is not injective (see Figure \ref{fig:roll} for an example to illustrate this).
We use uppercase Latin letters ($A,B,C,\ldots$) to denote subsets of nodes in rolled graphs, that is, subsets of $V\cup W$. In unrolled graphs, we use two ways to denote sets of nodes. First, lowercase Latin letters, e.g., $a$, to represent a general subset of $V^{t'} \cup W^{t'}$. We let $x^a$ denote the corresponding set of random variables, $a\subseteq V^{t'} \cup W^{t'}$. Second, for $A\subseteq V\cup W$ we let $\nu_t^A$ denote the nodes $\{\nu_t^i : i \in A \}\subseteq V^{t'} \cup W^{t'}$ and we let $\bar{\nu}_t^A$ denote $\{\nu_s^i : i \in A, s\leq t \}\subseteq V^{t'} \cup W^{t'}$. The corresponding sets of random variables are denoted by $X_t^A$ and $\bar{X}_t^A$, respectively. If $A \subseteq V\cup W$ consists of baseline variables only, that is, $A\subseteq W$, then $\bar{\nu}_t^A = \{\nu_0^i: i\in A\}$ for $t\geq 0$.

If a graph $D$ is a rolled version of $D_{t'}$, or if $D_{t'}$ is an unrolled version of $D$, then we say that the pair $(D,D_{t'})$ is \emph{proper}. The rolling operation is not bijective, but the two graphs in a proper pair clearly represent similar restrictions on the underlying causal DAG. When $(D,D_{t'})$ is a proper pair, we let $V^{t'} \cup W^{t'}$ denote the node set of $D_{t'}$. We say that an edge in $D_{t'}$, $\nu_s^i \rightarrow \nu_t^j$ is \emph{contemporaneous} if $s = t$. When $D_{t'}$ denotes the causal graph on ${t'}$ lags, we assume throughout that the standard $d$-separation Markov property holds in $D_{t'}$ for all ${t'}$. That is, for all $a,b,c \subseteq (V^{t'} \cup W^{t'})$, if $a$ and $b$ are $d$-separated by $c$ in $D_{t'}$ then $x^a$ and $x^b$ are conditionally independent given $x^c$.

The following proposition extends a result in the supplementary material of \cite{mogensen2020markov} to allow baseline variables.

\begin{proposition}[$\delta$-separation Markov property]
\label{prop:markov}
    Assume that $(D,D_{t'})$ is proper, and that $D_{t'}$ has no contemporaneous edges. Assume furthermore that the $d$-separation Markov property holds in $D_{t'}$. Let $A,B,C \subseteq V\cup W$ be disjoint. If $B$ is $\delta$-separated from $A$ given $C$ in $D$, then $X^B$ is locally independent of $X^A$ given $X^C$ until time ${t'}$.
\end{proposition}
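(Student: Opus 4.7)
The plan is to reduce the $\delta$-separation Markov property for the rolled graph $D$ to the assumed $d$-separation Markov property in the unrolled DAG $D_{t'}$. By Definition \ref{def:grangercausality}, it suffices to show that for every $t \in \{1,\ldots,t'\}$, $\bar{\nu}_{t-1}^A$ is $d$-separated from $\nu_t^B$ by $\bar{\nu}_{t-1}^{B\cup C}$ in $D_{t'}$; the assumed $d$-separation Markov property in $D_{t'}$ then delivers the conditional independence required for Granger non-causality until time $t'$. I argue the $d$-separation claim by contrapositive: starting from a $d$-connecting path $\pi$ in $D_{t'}$ between $\bar{\nu}_{t-1}^A$ and $\nu_t^B$ given $\bar{\nu}_{t-1}^{B\cup C}$, I produce a $\delta$-connecting path in $D^B$ between $A$ and $B$ given $C$, contradicting the hypothesis.

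The first step uses the absence of contemporaneous edges in $D_{t'}$ to derive structural constraints on $\pi$. (i) Every intermediate node of $\pi$ lies at time $\leq t-1$: at any intermediate attaining the maximum time $T_{\max} \geq t$ on $\pi$, both incident edges would be incoming (as edges go strictly forward in time and all other nodes lie at time $\leq T_{\max}$), producing a collider whose $D_{t'}$-descendants all lie at times $\geq T_{\max} > t-1$, hence outside $\bar{\nu}_{t-1}^{B\cup C}$, blocking $\pi$. (ii) The edge incident to $\beta \in \nu_t^B$ on $\pi$ must be incoming: an outgoing edge would have its far endpoint at time $>t$, contradicting (i). (iii) Any intermediate node $\nu_s^i$ of $\pi$ with $i \in B$ is a collider, since by (i) it lies in the conditioning set and would otherwise block as a noncollider. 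Project $\pi$ to a walk $\pi'$ in $D$ via $\nu_s^i \mapsto i$. By (ii) and (iii), no edge of $\pi'$ is outgoing from $B$, so $\pi'$ is a walk in $D^B$ between a vertex in $A$ and one in $B$. Each noncollider on $\pi'$ projects from a noncollider of $\pi$ at time $\leq t-1$ outside the conditioning set, so its label is outside $B\cup C$ and in particular outside $C$. Each collider on $\pi'$ projects from a collider of $\pi$ whose activating descendant in $D_{t'}$ carries a label in $B\cup C$, so the collider's label lies in $\an_D^+(B\cup C)$.

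The main obstacle is the gap between $\an_D^+(B\cup C)$ delivered by the projection and the $\an_D^+(C)$ required by Definition \ref{def:deltaSep}. To close it, let $k$ be the first collider on $\pi'$ (traversing from the $A$-endpoint) with $k \in \an_D^+(B)\setminus\an_D^+(C)$; if no such $k$ exists, $\pi'$ already satisfies the $\delta$-connecting conditions. Otherwise pick a shortest directed walk $k\to k_1\to\cdots\to b^\ast$ in $D$ with $b^\ast\in B$: by minimality no intermediate $k_i$ lies in $B$ (so the walk lies in $D^B$), and no intermediate $k_i$ lies in $C$ (else $k\in\an_D^+(C)$, contradicting the choice of $k$). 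Splice this directed walk onto $\pi'$ at the collider $k$, discarding the tail of $\pi'$ beyond $k$: at $k$ the collider becomes a pass-through noncollider (with $k\notin C$ since $k\notin\an_D^+(C)$), no new colliders are introduced, and the walk terminates at $b^\ast\in B$. The colliders of $\pi'$ strictly preceding $k$ were non-problematic by the choice of $k$, so the resulting walk in $D^B$ from $A$ to $B$ has every noncollider outside $C$ and every collider in $\an_D^+(C)$. A standard shortcut argument extracts a $\delta$-connecting path from this walk, contradicting the hypothesis that $B$ is $\delta$-separated from $A$ given $C$ in $D$.
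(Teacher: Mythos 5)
Your proposal is correct and takes essentially the same route as the paper: it reduces Granger non-causality to per-time $d$-separation statements in the unrolled graph and argues by contraposition, projecting a $d$-connecting path onto a $\delta$-connecting walk in the rolled graph, which is exactly what the paper does via Lemma \ref{lem:extDeltasepToDsep} with $E=\emptyset$. Your explicit rerouting at colliders lying in $\an^+(B\cup C)\setminus\an^+(C)$ (splicing in a shortest directed walk to $B$) makes precise a step that the paper's lemma compresses into ``From this walk, we can find a $\delta$-connecting path'' and only spells out in the proof of Theorem \ref{thm:markovContemp}.
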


The above result states that whenever $d$-separation in the unrolled graph, $D_{t'}$, implies conditional independence, then $\delta$-separation in the rolled graph, $D$, implies local independence (i.e., Granger non-causality). That is, the so-called global Markov property in the unrolled graph implies a global Markov property in the rolled graph.  This is a useful result since the $\delta$-separation results described can be used to represent the assumptions we need for the mediation analysis (Section \ref{singmed}), with some adjustments. There is a connection to results for stationary time series by \cite{eichler2007granger}, but here we allow non-stationarity which is natural in our setting.

In the above proposition, contemporaneous edges are not allowed and therefore there are no edges between baseline variables. We remove this restriction in the next section.

\begin{remark}
The concept of $\delta$-separation is relevant due to its relation with local independence. However, at a first glance it may not be clear why edges pointed away from $B$ are removed when deciding if $B$ is $\delta$-separated from $A$ given $C$. To understand this, one may look at a local independence graph and its unrolled graph, e.g., graphs \textbf{A} and \textbf{B} in Figure \ref{fig:roll}. $R$ is $\delta$-separated from $S$ given $\{Q,P\}$ as the edge $R \rightarrow S$ is removed to obtain $D^{\{R\}}$. In the unrolled graph, this edge corresponds to edges $\nu_s^R\rightarrow \nu_t^S$ for $s<t$. In terms of $d$-separation in the unrolled graph, this means that paths between $\nu_2^R$ and $\{\nu_0^S,\nu_1^S \}$ which include an edge $\nu_s^R\rightarrow \nu_t^S$, $s<t$, would always contain a noncollider in the past of $R$ which is blocked as the past of $R$ is tacitly conditioned upon (see Definition \ref{def:grangercausality}).
\end{remark}

\subsection{$\delta$-separation and contemporaneous effects}
\label{deltasepContemp}

Local independence graphs and $\delta$-separation as introduced disallow causal effects within a time slice. Contemporaneous effects may change the local independencies represented by a graph which means that to include contemporaneous effects we must extend the $\delta$-separation framework. For simplicity we assume that if $\nu_t^i \rightarrow \nu_t^j$ is in the causal graph, then also $\nu_s^i \rightarrow \nu_u^j$ for some $s < u$.

For this extension, we use graphs with both \emph{directed edges}, $\rightarrow$, and \emph{tailed directed edge}, $\tailedrightarrow$, and we will say that these are \emph{tailed directed graphs} or \emph{extended local independence graphs}. In visualizations, tailed directed edges are green. The edge $\rightarrow$ will have the same meaning as above while $\tailedrightarrow$ will represent the fact that there could be both contemporaneous and lagged effects. For this reason, if $i \tailedrightarrow j$ is in the graph, there is no reason to include $i \rightarrow j$ as well and we will omit the latter edge. We use $i\starrightarrow j$ to denote that $i\rightarrow j$ or $i\tailedrightarrow j$. We will assume that only baseline variables have edges into baseline variables, i.e., if $i\starrightarrow j$ and $j\in W$, then $i\tailedrightarrow j$ and $i\in W$. The rolling and unrolling operations to translate between the underlying causal DAG and the corresponding extended local independence graph only need minor adjustments.

\begin{definition}[Unrolling, contemporaneous effects]
\label{def:unrollingExt}
    Let $D$ be a tailed directed graph on nodes $V \cup W$ such that the nodes in $W$ correspond to baseline variables, and let ${t'}\geq 1$. The \emph{unrolled version} of $D$ on ${t'}$ lags, $D_{t'}$, is the DAG on nodes $ V^{t'} \cup W^{t'}$, 
    
    $$V^{t'} = \cup_{i\in V} \{ \nu_0^i, \ldots, \nu_{t'}^i \}, \ \ \  W^{t'} = \{\nu_0^i, i\in W \}$$
    
    \noindent such that $\nu_s^i \rightarrow \nu_t^j$ in $D_{t'}$ if $s<t$ and $i\starrightarrow j$ in $D$ and such that $\nu_t^i \rightarrow \nu_t^j$ in $D_{t'}$ if $i\tailedrightarrow j$ in $D$.
\end{definition}

\begin{definition}[Rolling, contemporaneous effects]
\label{def:rollingExt}
    Let $D_{t'}$ be a DAG on nodes $V^{t'} \cup W^{t'}$ such that $V^{t'} = \cup_{i\in V} \{ \nu_0^i, \ldots, \nu_{t'}^i \}$ and $W^{t'} = \{\nu_0^i, i\in W  \}$ for disjoint sets $V$ and $W$. The \emph{rolled version} of $D_{t'}$ is the tailed directed graph, $D$, on nodes $V \cup W$ such that for $i,j \in V \cup W$ we have that $i\starrightarrow j$ if there exists $s\leq t$ such that $\nu_s^i \rightarrow \nu_t^j$ in $D_{t'}$ and $i\tailedrightarrow j$ if and only if there exists $t$ such that $\nu_t^i \rightarrow \nu_t^j$ in $D_{t'}$.
\end{definition}

The causal DAGs that we consider are acyclic. However, if the contemporaneous edges are different for different time lags, then the tailed part of the rolled graphs do not necessarily form an acyclic graph. We let $\ant(B)$ denote the set of nodes $k$ such that there exists a directed path from $k$ to $j\in B$ consisting of tailed directed edges only, $k \tailedrightarrow \ldots \tailedrightarrow j$, and we let $\deta (B)$ denote the set of nodes $k$ such that there exists a directed path from $j \in B$ to $k$ consisting of tailed directed edges only. We let $\pat (B)$ denote the set of nodes $k$ such that $k\tailedrightarrow j$, $j\in B$. We use the convention that $\pat (B) \cap B = \ant(B) \cap B = \deta(B)\cap B  = \emptyset$. We extend $\delta$-separation to tailed directed graphs in the following way. If $D$ is a tailed directed graph, we construct a DG, $D^-$, by ignoring the distinction between directed edges and tailed directed edges, i.e.,

\begin{align*}
i \rightarrow j \text{ in } D^- \text{ if and only if } i\starrightarrow j
\text{ in } D.
\end{align*} 

\noindent We say that $B$ is $\delta$-separated from $A$ given $C$ in $D$ if $B$ is $\delta$-separated from $A$ given $C$ in $D^-$. The following example illustrates that we need to generalise Proposition \ref{prop:markov} when allowing contemporaneous effects.

\begin{example}
We consider graph $\mathbf{D}$ in Figure \ref{fig:roll} and denote it by $D$. This is an extended local independence graph with contemporaneous effects from $Q$ to $R$ and from $S$ to $Q$, and Graph $\mathbf{A}$ is the same graph but with directed edges instead of tailed directed edges. We see that the only path between $R$ and $S$ in $D^{\{R\}}$ is $S \starrightarrow Q \starrightarrow R$ and this is closed when conditioning on $Q$. This means that $R$ is $\delta$-separated from $S$ given $Q$ in graphs $\mathbf{A}$ and $\mathbf{D}$ when simply interpreting tailed directed edges as directed edges. In graph $\mathbf{B}$, this corresponds to the fact that there are no $d$-connecting paths between $\nu_2^R$ and $\{\nu_0^S,\nu_1^S\}$ given $\{\nu_0^Q,\nu_1^Q,\nu_0^R,\nu_1^R\}$. However, in the presence of contemporaneous effects this does not hold as $\nu_1^S \rightarrow \nu_2^Q \rightarrow \nu_2^R$ is $d$-connecting given $\{\nu_0^Q,\nu_1^Q,\nu_0^R,\nu_1^R\}$ in graph $\mathbf{E}$. This illustrates that if we simply interpret the tailed directed edges as directed edges, then Proposition \ref{prop:markov} does not hold.
\end{example}

In the mediation analysis, we will assume that the contemporaneous effects are from the outcome process to the other processes. We imagine that whenever a subject is examined we first obtain their survival indicator. Depending on survival, we may or may not obtain other measurements, e.g., blood pressure. This implicitly defines a partial ordering of the contemporaneous variables in which survival is first. Other contemporaneous edges may also be added. 

The next theorem gives a general approach to the graphical representation of local independence in time series with contemporaneous effects and it generalises Proposition \ref{prop:markov} which does not allow contemporaneous effects.  We let $\antv(B)$ denote the set of non-baseline nodes that are tailed ancestors of $B$, that is, $\antv(B) = \ant(B) \cap V$.

\begin{theorem}[$\delta$-separation Markov property with contemporaneous effects]
    Assume $(D,D_{t'})$ is proper and assume that the $d$-separation Markov property holds in $D_{t'}$. Let $A,C \subseteq V\cup W$ and $B\subseteq V$ be disjoint. If $A\cap \ant(B) = \emptyset$ and $A \not\rightarrow_\delta (\antv(B) \cap C) \cup B \mid C \setminus \antv(B)$, then $X^B$ is locally independent of $X^A$ given $X^C$ until time ${t'}$.
\label{thm:markovContemp}
\end{theorem}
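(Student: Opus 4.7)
The plan is to argue by contradiction through the global Markov property assumed for the unrolled graph $D_{t'}$. By that property, it suffices to show that for every $t \leq t'$ the sets $\bar{\nu}^A_{t-1}$ and $\nu_t^B$ are $d$-separated by $\bar{\nu}^{B \cup C}_{t-1}$ in $D_{t'}$. Supposing otherwise, I would fix a minimal $d$-connecting path $\omega$ with endpoints $\nu_{s_0}^a$ ($a \in A$, $s_0 \leq t-1$) and $\nu_t^b$ ($b \in B$). Because any collider on a $d$-connecting walk lies in $\an^+$ of the conditioning set and ancestors of nodes at time $\leq t-1$ are themselves at time $\leq t-1$, every collider of $\omega$ sits at time $\leq t-1$. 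A fork/chain analysis of the time-index sequence then shows that $\omega$ never ascends above time $t$, and that its visit to time $t$ forms a single terminal block $u_m, \ldots, u_n$ connected by contemporaneous edges $u_m \to u_{m+1} \to \cdots \to u_n = \nu_t^b$; in particular, the rolled coordinates $h_m, \ldots, h_{n-1}$ belong to $\antv(B)$.

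Set $B' = (\antv(B) \cap C) \cup B$ and $C' = C \setminus \antv(B)$. Roll $\omega$ to a walk $\omega^r$ in $D^-$ and let $l$ be the smallest index with $h_l \in B'$; this $l$ exists because $h_n = b \in B$, and the hypotheses $A \cap \ant(B) = \emptyset$ together with $A \cap B = \emptyset$ force $l \geq 1$. The prefix $\omega^r_{\leq l}$ is a walk in $(D^-)^{B'}$: only its final edge could be tailed at $B'$, and if it were tailed at $h_l$ then either $l$ lies in the terminal block (impossible, since the last rolled edge there is a forward directed edge pointing into $h_l$) or $l < m$, in which case the DAG edge $u_l \to u_{l-1}$ would make $u_l$ a noncollider of $\omega$ while $h_l \in B' \subseteq B \cup C$ and $s_l \leq t-1$ place $u_l$ into the conditioning set, blocking $\omega$.

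It remains to verify that $\omega^r_{\leq l}$ is $\delta$-connecting given $C'$. An internal noncollider $h_k$ corresponds to a noncollider $u_k$ of $\omega$ with $u_k \notin \bar{\nu}^{B \cup C}_{t-1}$; splitting by whether $s_k = t$ (so $h_k \in \antv(B) \setminus C$ by the minimality of $l$) or $s_k \leq t-1$ (so $h_k \notin B \cup C$) yields $h_k \notin C'$ in both cases. For an internal collider $h_k$, the witness $u_k \in \an^+(\bar{\nu}^{B \cup C}_{t-1})$ splits in two. If $u_k$ itself lies in the conditioning set, then $h_k \in (B \cup C) \setminus B' = C'$, so $h_k \in \an^+(C')$. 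The main obstacle is the remaining sub-case, in which $u_k$ is a strict ancestor of some $\nu_r^j$ with $j \in B \cup C$ but the witnessing rolled directed path $h_k \to \cdots \to j$ in $D^-$ visits no $C'$-node; in this sub-case $h_k \notin B \cup C$ and $j \in (B \cup C) \setminus C' = B'$. I would take the first such problematic collider and splice: replace the tail of $\omega^r_{\leq l}$ beyond $h_k$ by the witnessing directed path truncated at its first entry into $B'$. In the spliced walk, $h_k$ becomes a chain noncollider with $h_k \notin C'$, the inserted directed path contributes only chain nodes outside $C'$, all preceding colliders are of the activated type, and the walk remains in $(D^-)^{B'}$ by the truncation. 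The result is a $\delta$-connecting walk in $D$ from $A$ to $B'$ given $C'$, contradicting the hypothesis.
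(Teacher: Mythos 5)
Your proposal is correct and follows essentially the same route as the paper's proof: argue contrapositively from a $d$-connecting path between $\bar{\nu}_{t-1}^A$ and $\nu_t^B$ given $\bar{\nu}_{t-1}^{B\cup C}$, note that all colliders lie below lag $t$ so the lag-$t$ portion is a terminal contemporaneous block of $\antv(B)$-nodes, roll the path, truncate at the first node of $(\antv(B)\cap C)\cup B$, and repair colliders not in $\an^{+}(C\setminus\antv(B))$ by concatenating with a directed path truncated at its first entry into $(\antv(B)\cap C)\cup B$ --- exactly the paper's splicing step. Your version is in fact slightly more careful at that step (allowing the witness to terminate in $B$ rather than only $\antv(B)\cap C$); the only loose end is that you finish with a $\delta$-connecting walk rather than a path, which requires the standard walk-to-path reduction that the paper also invokes implicitly.
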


\begin{figure}
\centering
        \begin{subfigure}[t]{0.45\textwidth}
                \centering
                \subcaption{}
                \includegraphics[width=.75\linewidth]{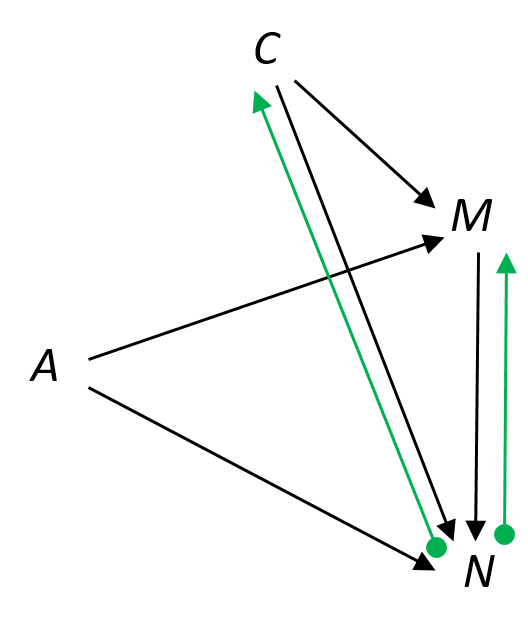}
                \label{graph of processes}
        \end{subfigure}\hfill
        \begin{subfigure}[t]{0.45\textwidth}
                \centering
                \subcaption{}
                \includegraphics[width=\linewidth]{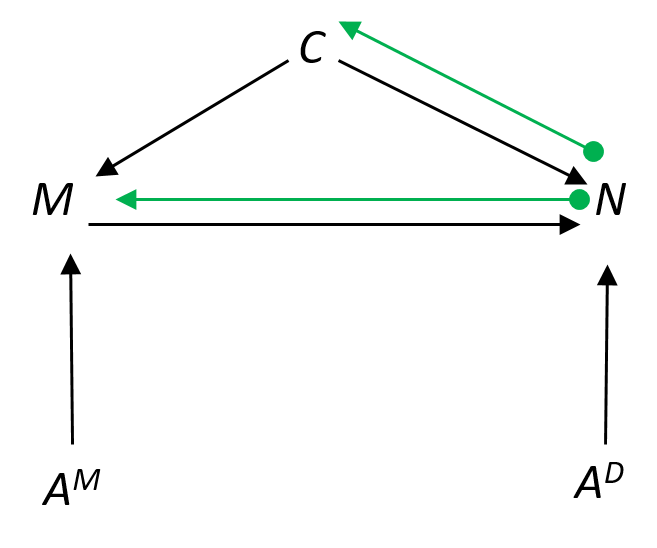}
                \label{local_independence_1}
        \end{subfigure}\hfill
        \caption{(\subref{graph of processes}) Extended local independence graph illustrating the relationship between treatment $A$, mediator process $M$, confounder process $C$ and outcome process $N$. The green arrows represent causal effects from the counting process $N$ as well as contemporaneous effects from the survival time. (\subref{local_independence_1}) Extended local independence graph in which Assumptions A1 and A2 hold (Proposition \ref{contempAssump}). $M$ denotes the mediator process and $N$ is a counting process (jumping to 1 when there is an event, at which time the process is stopped). $C$ denotes the observed confounders which may be either baseline variables or processes. The nodes $A$, $A^M$ and $A^D$ denote treatment and treatment components which are fixed at baseline.}
\end{figure}

\section{Causal assumptions and $\delta$-separation}
\label{singmed}

We assume that the causal model is represented by a DAG. In the previous section, we showed how a local independence graph can be constructed from this DAG and in this section we will argue that using $\delta$-separation one may deduce the validity of the mediation assumptions from this local independence graph. We now go back to the setting of the causal survival model described in Subsection \ref{basic model}.

\begin{proposition}
Assume a discrete-time causal model such that the global Markov property holds with respect to each unrolled graph, $D_t$, and such that the only contemporaneous effects are from $N$ to $M$ and from $N$ to $C$. We let $D$ denote the rolled graph such that $i\rightarrow j$ is in $D$ if it is in the rolled version of $D_t$ for any $t$ and such that $i\tailedrightarrow j$ is in $D$ if it is in the rolled version of $D_t$ for any $t$. We have that $A^D \not \rightarrow_\delta M \mid A^M, C, N$ in $D$ implies A1 and that $A^M \not \rightarrow_\delta C \mid A^D, M, N$  in $D$ implies A3. Moreover, $A^M \not \rightarrow_\delta N \mid A^D, C, M$ implies the independence in A2 for $t = t_{k+1}$, that is, a discrete version of A2.
\label{contempAssump}
\end{proposition}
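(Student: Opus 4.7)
The plan is to derive each of A1, A2, A3 by applying Theorem \ref{thm:markovContemp} to obtain local independence, then unpacking this at the appropriate single time point and using monotonicity of the counting process $N$ to translate past values of $N$ into the event $\{T > t_k\}$.

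\textbf{A2.} Since $N$ has no contemporaneous parent in our setting, $\antv(N) = \emptyset$, so Theorem \ref{thm:markovContemp} applies directly with $A = \{A^M\}$, $B = \{N\}$, $C = \{A^D, M, C\}$: the theorem's premise reduces exactly to the proposition's hypothesis $A^M \not\rightarrow_\delta N \mid A^D, M, C$, and its conclusion is local independence $A^M \not\rightarrow N \mid A^D, M, C$ until $t'$. Unpacking via Definition \ref{def:grangercausality} at $t = k+1$ gives
\[
A^M \perp\!\!\!\perp N_{k+1} \mid \overline{N}_k, A^D, \overline{M}_k, \overline{C}_k,
\]
and because $N$ is a non-decreasing counting process, the event $\{\overline{N}_k = 0\}$ coincides with $\{T > t_k\}$, yielding the discrete version of A2 at $t = t_{k+1}$.

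\textbf{A1 and A3.} These are more delicate because the target process ($M$ for A1, $C$ for A3) has $N$ as its only contemporaneous parent, so $\antv(M) = \antv(C) = \{N\}$. A naive attempt to invoke Theorem \ref{thm:markovContemp} with $N$ in the conditioning set fails: for A1 the theorem would require the premise $A^D \not\rightarrow_\delta \{M, N\} \mid A^M, C$, which is strictly stronger than the proposition's hypothesis and in general violated since $A^D$ has a direct effect on survival. The plan instead is to establish the required $d$-separation in the unrolled graph $D_t$ directly from the $\delta$-separation hypothesis in $D$, then invoke the $d$-separation Markov property in $D_t$. Concretely, starting from $A^D \not\rightarrow_\delta M \mid A^M, C, N$, one shows by a projection argument in the spirit of the proof of Theorem \ref{thm:markovContemp} that every walk in $D_t$ between $\nu_0^{A^D}$ and $\nu_k^M$ is $d$-blocked by $\bar{\nu}_k^N \cup \{\nu_0^{A^M}\} \cup \bar{\nu}_{k-1}^M \cup \bar{\nu}_{k-1}^C$. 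Combined with monotonicity of $N$ this produces A1. The A3 case is analogous with $M$ and $C$ swapped.

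\textbf{Main obstacle.} The crux is the projection argument for A1 and A3. The conditioning set in the unrolled graph contains the contemporaneous node $\nu_k^N$, which is a parent of $\nu_k^M$; walks ending with $\nu_k^N \to \nu_k^M$ are trivially blocked, but walks reaching $\nu_k^M$ through other routes must be handled carefully. One must verify that the noncollider/collider status of each projected node in $D$ is consistent with blocking the walk in $D^{\{M\}}$ under conditioning $\{A^M, C, N\}$, so that an unblocked walk in $D_t$ would produce the forbidden $\delta$-connecting path in $D$. The subtle case is when the coordinate process $N$ appears on the walk both as a past node (with possibly open outgoing edges) and as a contemporaneous parent of $M_k$; reconciling these two roles simultaneously is the technical heart of the proof.
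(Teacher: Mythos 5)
Your overall strategy coincides with the paper's: A2 is obtained from the contemporaneous-effects Markov machinery (your use of Theorem \ref{thm:markovContemp} with $\antv(N)=\emptyset$ is equivalent to the paper's appeal to Corollary \ref{cor:extDeltasepToDsep2}), and A1/A3 are to be obtained by translating the rolled-graph $\delta$-separation into an unrolled-graph $d$-separation whose conditioning set contains the contemporaneous node $\nu_k^N$; you also correctly diagnose why Theorem \ref{thm:markovContemp} cannot be invoked verbatim for A1/A3. The gap is that for A1/A3 you stop exactly where the work is. The claim that every walk in $D_t$ between $\nu_0^{A^D}$ and $\nu_k^M$ is $d$-blocked by $\bar{\nu}_k^N\cup\{\nu_0^{A^M}\}\cup\bar{\nu}_{k-1}^M\cup\bar{\nu}_{k-1}^C$ is precisely the content of the paper's Lemma \ref{lem:extDeltasepToDsep} (used through Corollary \ref{cor:extDeltasepToDsep1}, with a tail-ancestral set $E\supseteq\pat(B)$ adjoined at lag $k$), and the paper's proof of the proposition consists essentially of an appeal to that lemma. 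You label this projection argument ``the technical heart'' but do not execute it, so the proposal is a correct plan rather than a proof: the needed case analysis (a head at $\nu_k^M$ is forced; the adjacent node must lie at an earlier lag because $\pat(M)=\{N\}$ is conditioned on at lag $k$; noncolliders and colliders of the projected walk must be matched against $C$ and $\an^{+}(C)$ to produce a forbidden $\delta$-connection in $D^{\{M\}}$) is exactly what the lemma supplies.

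Two further points. First, your ``swap $M$ and $C$'' reduction for A3 produces the conditioning set $\bar{\nu}_k^N\cup\{\nu_0^{A^D}\}\cup\bar{\nu}_{k-1}^C\cup\bar{\nu}_{k-1}^M$, but A3 conditions on $\overline{M}_k$, i.e., also on the contemporaneous $\nu_k^M$; you must therefore include $\nu_k^M$ in the unrolled separating set (in the paper's terms, take $E=\{M,N\}$, which is tail ancestral since the only tailed edges are out of $N$), otherwise you establish an independence with a smaller conditioning set that does not imply A3. Second, two small technicalities the paper handles and you should too: $(D,D_t)$ need not be a proper pair because $D$ is defined as a union over $t$ of rolled graphs, so one first passes to the rolled version of $D_t$ (a subgraph of $D$, in which the assumed $\delta$-separations persist); and the case $t_k=0$ requires a separate short argument, since the Granger/unrolled machinery only covers lags $t\geq 1$.
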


The above proposition means that the assumptions A1-A3 in Section \ref{causmed} can be represented by $\delta$-separations in an extended local independence graph. As an example, consider the graph in Figure  \ref{local_independence_confounders_d}. It holds that $A^D  \not\rightarrow_{\delta} M \thinspace \mid  \thinspace A^M,C,N$. Using the above proposition, this means that $A^D \mathrel{\text{\scalebox{1.07}
{$\perp\mkern-10mu\perp$}}}  M_{k} \thinspace  \mid   \thinspace
\overline{C}_{k-1}, \overline{N}_{k}, A^M, \overline{M}_{k-1}$ corresponding to Assumption A1. A similar correspondence may be given for assumption A3. Assumption A2 does not formally correspond to a $\delta$-separation condition on the local independence graph as it uses a continuum of time points. It may be represented heuristically through the $\delta$-separation $A^M \not\rightarrow_\delta N \mid A^D, M, C$. In case of a discrete-time outcome process, this $\delta$-separation does in fact imply A2.

 We use the local independence graph representation for two reasons. First, the local independence graphs are simpler than their corresponding DAGs as they have fewer nodes and edges, and they therefore work better as communicative tools. One may also use a DAG representation with only two time lags to achieve something similar. Second, the local independence graphs can also be used in continuous-time models in which unrolled graphical representations do not correspond to the data-generating mechanism. Appendix \ref{ssec:continuousTimeExample} gives an example of mediation analysis using local independence graphs in a continuous-time multivariate stochastic process. 

\subsection{Examples}
\label{unmeasured confounders}

We now give some examples to illustrate the connection between local independence graphs and mediation analysis. As described in Subsection \ref{causmed}, we assume that the treatment, $A$, consists of two components and this means that our hypothetical experiment has two treatment variables, $A^D$ and $A^M$. Figure \ref{graph of processes} represents observable data while Figure \ref{local_independence_1} represents the hypothesised experiment in which treatment components $A^D$ and $A^M$ can take different values.

\begin{figure}
\centering
        \begin{subfigure}[b]{0.45\textwidth}
                \centering
                \includegraphics[width=\linewidth]{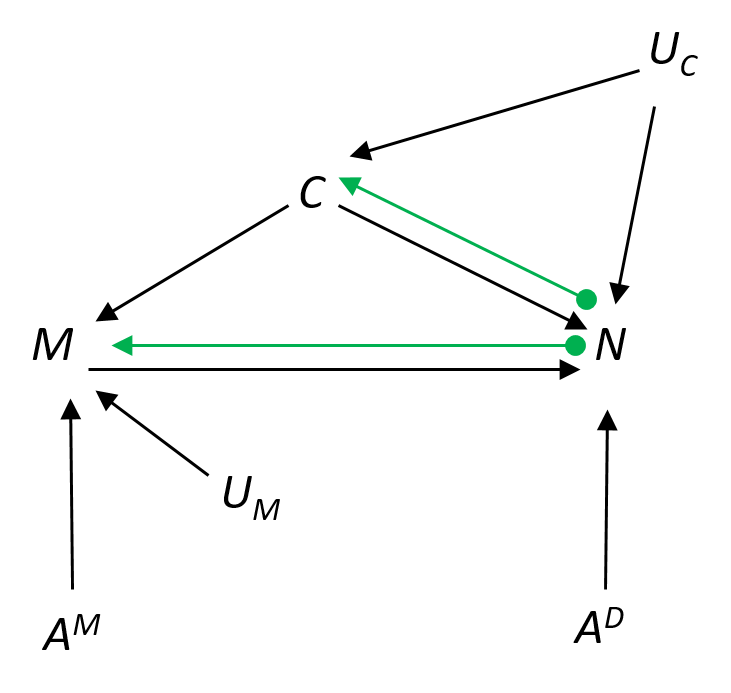}
                \caption{Extended local independence graph in which Assumptions A1 and A3 can be verified using Proposition \ref{contempAssump}. $M$ denotes the mediator process and $N$ the counting process. $C$ denotes the observed confounders which may be given at baseline or be a process over time. $U_M$ and $U_C$ denote unmeasured confounder processes. The nodes $A^M$ and $A^D$ denote treatment components fixed at baseline.}
            \label{local_independence_confounders}
        \end{subfigure}\hfill
        \begin{subfigure}[b]{0.45\textwidth}
                \centering
                \includegraphics[width=\linewidth]{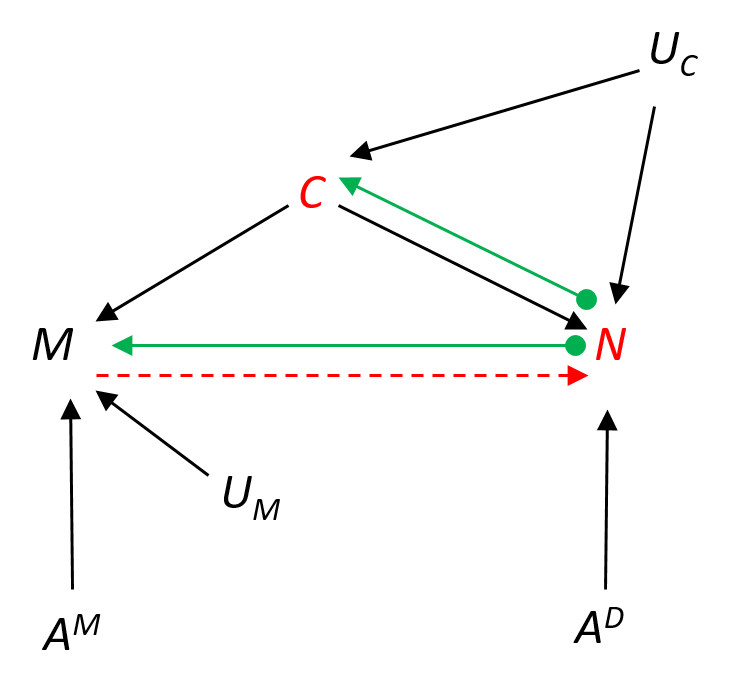}
                \caption{We apply the definition of  $\delta$-separation to the graph in Figure \ref{local_independence_confounders} to decide if $M$ is $\delta$-separated from $A^D$ given $\{A^M,C, N\}$. Notice that the directed edge from $M$ to $N$, here marked in red, is removed to construct the auxiliary graph $D^{\{M\}}$. }
        \label{local_independence_confounders_d}
        \end{subfigure}\hfill
        \begin{subfigure}[b]{0.45\textwidth}
                \centering
                \includegraphics[width=\linewidth]{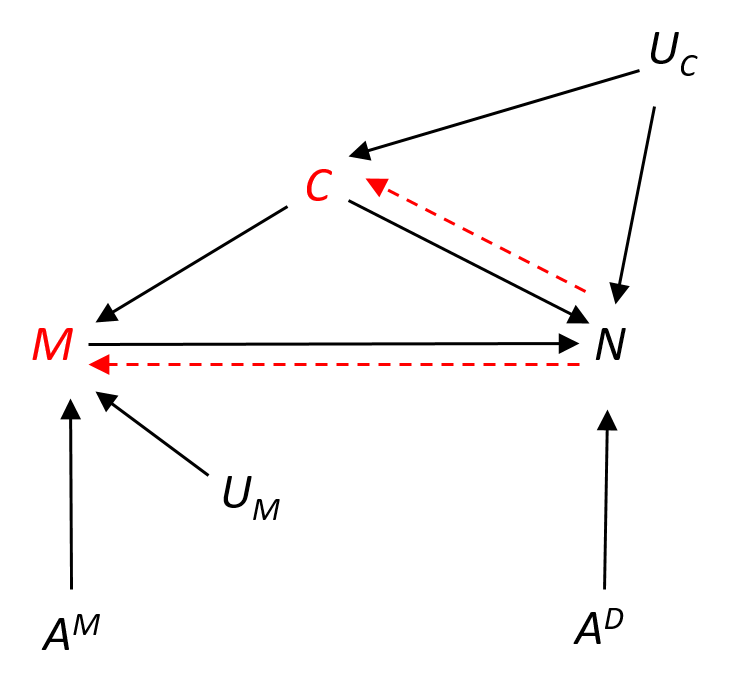}
                \caption{We apply the definition of  $\delta$-separation to the graph in Figure \ref{local_independence_confounders} to decide if $N$ is $\delta$-separated from $A^M$ given $\{A^D, C,M\}$. Notice that the directed edges out of $N$ in Figure \ref{local_independence_confounders}, here marked in red, are removed to construct $D^{\{N\}}$. The $\delta$-separation holds and this, informally, implies Assumption A2 using Proposition \ref{contempAssump}.}
                \label{local_independence_confounders_e}
        \end{subfigure}\hfill
        \begin{subfigure}[b]{0.45\textwidth}
                \centering
                \includegraphics[width=\linewidth]{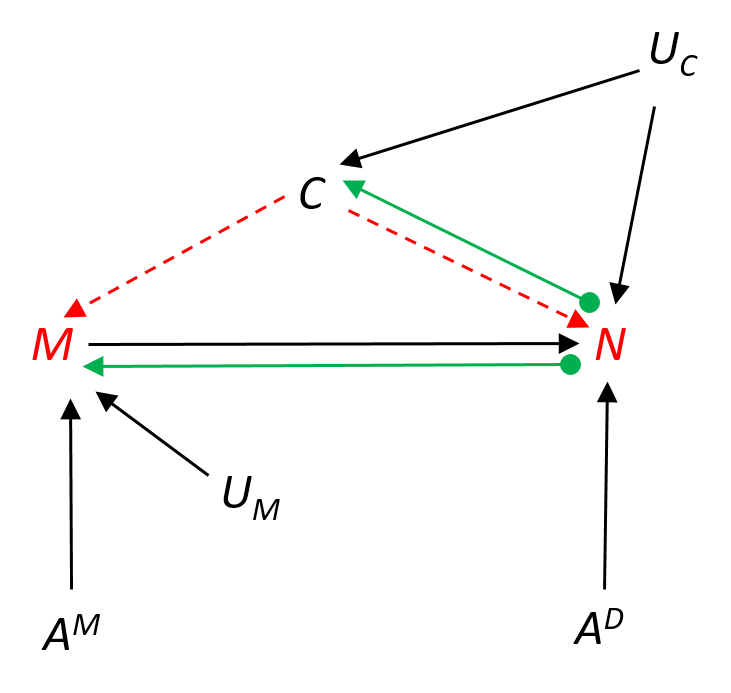}
                \caption{We apply the definition of  $\delta$-separation to the graph in Figure \ref{local_independence_confounders} to decide if $C$ is $\delta$-separated from $A^M$ given $\{A^D, M, N\}$. Notice that the directed edges out of $C$ in Figure \ref{local_independence_confounders}, here marked in red, are removed to construct $D^{\{C\}}$. One sees that the $\delta$-separation holds and this implies Assumption A3 using Proposition \ref{contempAssump}.}
                \label{local_independence_confounders_f}
        \end{subfigure}
        \caption{}\label{fig:li2}
\end{figure}

This approach can also be used in the presence of unmeasured confounder processes, $U^M$ and $U^C$, such as those in Figure \ref{local_independence_confounders}, see also \cite{vansteelandt2019mediation}.  Using Proposition \ref{contempAssump}, $\delta$-separation allows us to verify the assumptions by consulting the local independence graph.

\begin{example}
\label{gen}
We consider the graphs in Figure \ref{fig:li2} and assume that the graphs are rolled versions of the corresponding causal DAGs. These graphs all represent the hypothetical intervention in which $A^D$ and $A^M$ need not be equal. Note that in Figure \ref{local_independence_confounders} the event process $N$ may influence the mediator process $M$. This is relevant when $N$ does not only represent a single occurrence, but instead a number of events in a recurrent process, and this extends model represented in Figures \ref{graph of processes} and \ref{local_independence_1}.

Assumption A1 corresponds, loosely speaking, to the process $M$ being locally independent of $A^D$ given $A^M$, $C$ and $N$. We apply the $\delta$-separation criterion (Definition \ref{def:deltaSep}) to check whether this holds in the model represented by the graph, $D$, in Figure \ref{local_independence_confounders}. We construct the graph $D^{\{ M\}}$ by removing the directed edge from $M$ to $N$, marked in red in Figure \ref{local_independence_confounders_d}. We see that all remaining paths between $A^D$ and $M$ contain either $N$ or $C$ as a noncollider. As both these nodes are in the conditioning set, we see that $M$ is $\delta$-separated from $A^D$ given $\{A^M, C, N\}$. This implies that Assumption A1 holds using Proposition \ref{contempAssump}.

The $\delta$-separation rules are useful as it would be hard to judge intuitively which unmeasured confounders might invalidate the mediation analysis. As an example of this, consider the inclusion of an edge from $U_M$ to $N$ in Figure \ref{local_independence_confounders}. This would open a path between $A^D$ and $M$ in Figure \ref{local_independence_confounders_d} such that every collider is in $\an^\texttt{+}(\{A^M,C,N\})$ and no noncollider is in $\{A^M,C,N\}$. Therefore, $M$ is not $\delta$-separated from $A^D$ given $\{A^M,C,N\}$ and the validity of Assumption A1 does not follow.

Figures \ref{local_independence_confounders_e} and \ref{local_independence_confounders_f} show graphs $D^{\{N\}}$ and $D^{\{C\}}$ that can be used to check $\delta$-separations that imply the validity of Assumptions A2 and A3, respectively.
\end{example}


\begin{figure}[ptb]
\centering
\includegraphics[width=0.65\textwidth]{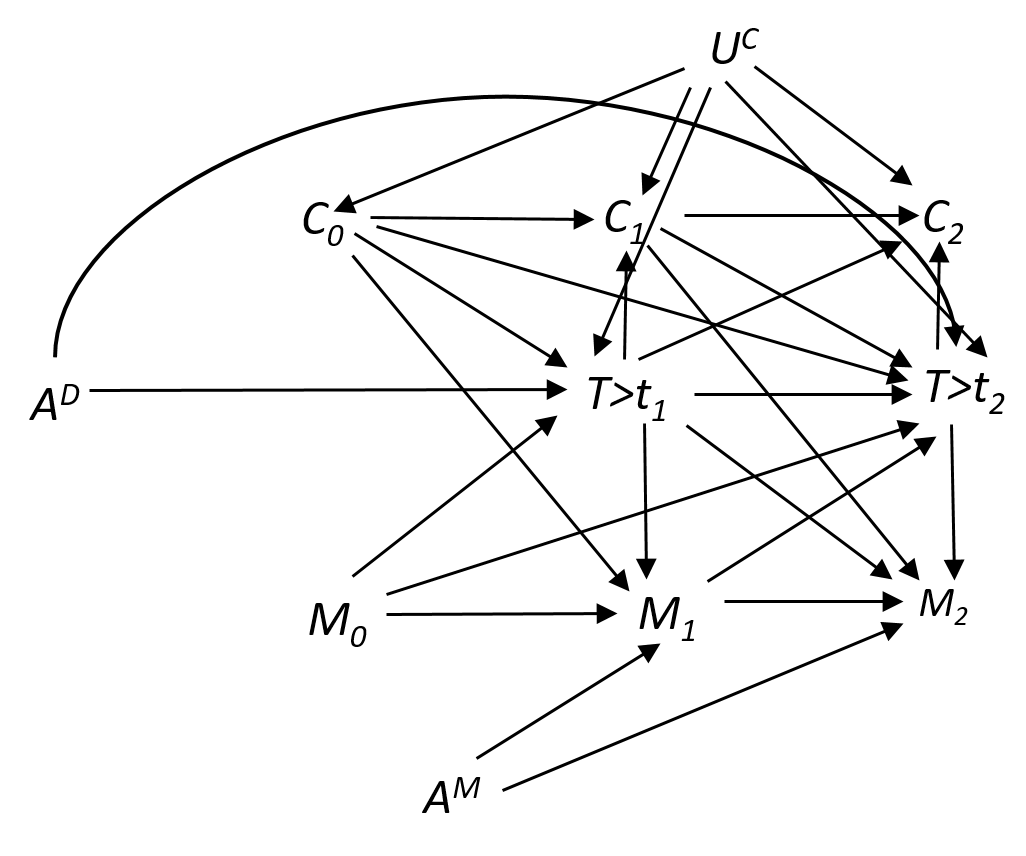} \caption{Causal directed acyclic graph for the first three time points.}%
\label{DAG_confounders}%
\end{figure}

Instead of the above approach, one may use $d$-separation in the underlying DAGs. However, these graphs will be more complex and the simplicity of the rolled graphs may make them more suitable when communicating with subject matter experts. An example DAG is given in Figure \ref{DAG_confounders}. Note that the unmeasured confounder, $U^C$, in Figure \ref{DAG_confounders} is a scalar variable, but it could just as well be a process.

\begin{example}
\label{partial}

We consider briefly a more complex setting as shown in Figure \ref{fig:li1}. The corresponding graphs showing $\delta$-separation are given in Figures \ref{local_independence_partial_a}, \ref{local_independence_partial_b} and \ref{local_independence_partial_c}. The graphs show the validity of Assumptions A1, A2 and A3. Hence, we have the same type of effects as in Example \ref{gen} and they are identified from observational data. There is also an apparent mediation through $C$, but this remains part of the direct effect in the present setup. Note that in this setup we need a model for how process $C$ depends on $A^D$.

There is a relationship to path-specific effects as discussed in Section 5.5.1 of \cite{vanderweele15}, see also \cite{shpitser2013counterfactual} and \cite{robins2022interventionist}. We believe a development of path-specific effects can be done within our framework, but further development must be made.

The example used here illustrates how a somewhat more complex mechanistic model can be considered. In practice one will have limited knowledge as to the correct mechanistic model, but one might try different alternatives and compare their effects. We shall not go into further details concerning this example.

\end{example}

\begin{figure}
\centering
        \begin{subfigure}[t]{0.45\textwidth}
                \centering
                \includegraphics[width=\linewidth]{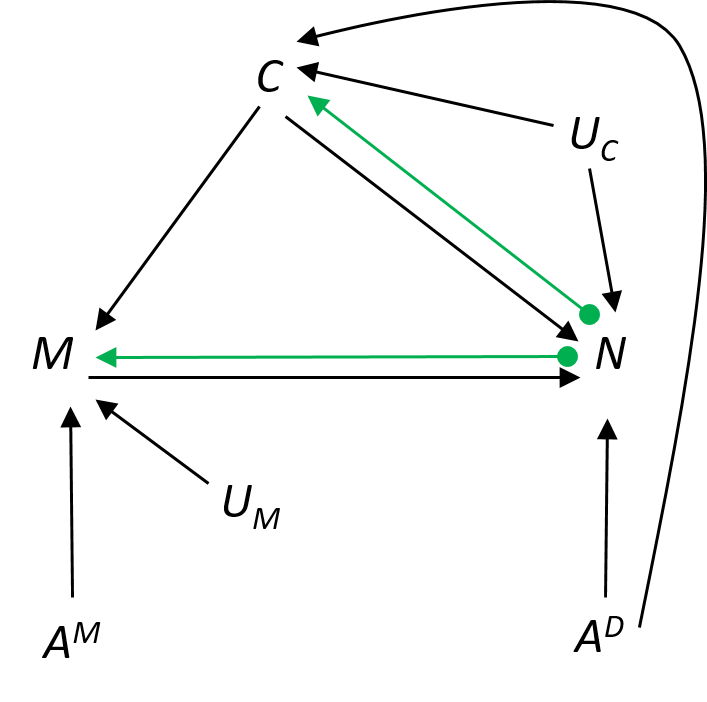}
                \caption{Extended local independence graph corresponding to time-continuous versions of  assumptions A1 and A2. $M$ denotes a mediator process and $N$ the counting process. $C$ also denotes a mediator process in this setting. $U_M$ and $U_C$ denote unmeasured confounder processes. The nodes $A^M$ and $A^D$ denote mediated and direct treatment components fixed at baseline. Note that the mediated effect is the one that goes through the process $M$ after an immediate effect from the treatment $A$. Note from the figure there will also be mediation through $C$ and further through $M$, but this shall not be included and will remain a part of the direct effect.}
            \label{local_independence_partial}
        \end{subfigure}\hfill
        \begin{subfigure}[t]{0.45\textwidth}
                \centering
                \includegraphics[width=\linewidth]{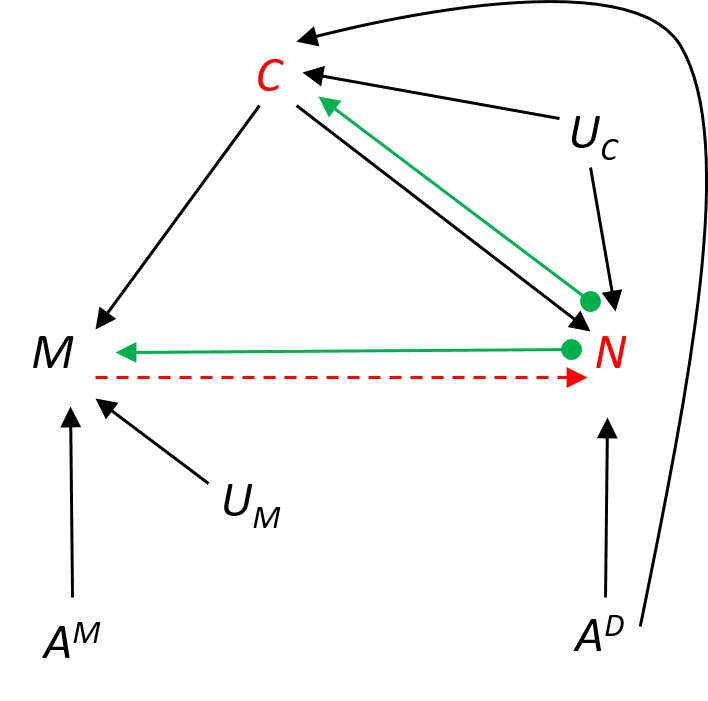}
                                \caption{Applying the directed path version of  $\delta$-separation to the graph in Figure \ref{local_independence_partial}, thus proving the local independence of $A^D$ on $M$ given $C$, $N$ and $A^M$. Notice that the directed node out of $M$ to $N$ in Figure \ref{local_independence_partial} (here marked in red) is removed as required by the $\delta$-separation rule. One sees that nodes $C$ and $N$ (marked in red) together block all paths from $A^D$ to $M$. This corresponds to assumption A1.}
            \label{local_independence_partial_a}
        \end{subfigure}\hfill
        \begin{subfigure}[t]{0.45\textwidth}
                \centering
                \includegraphics[width=\linewidth]{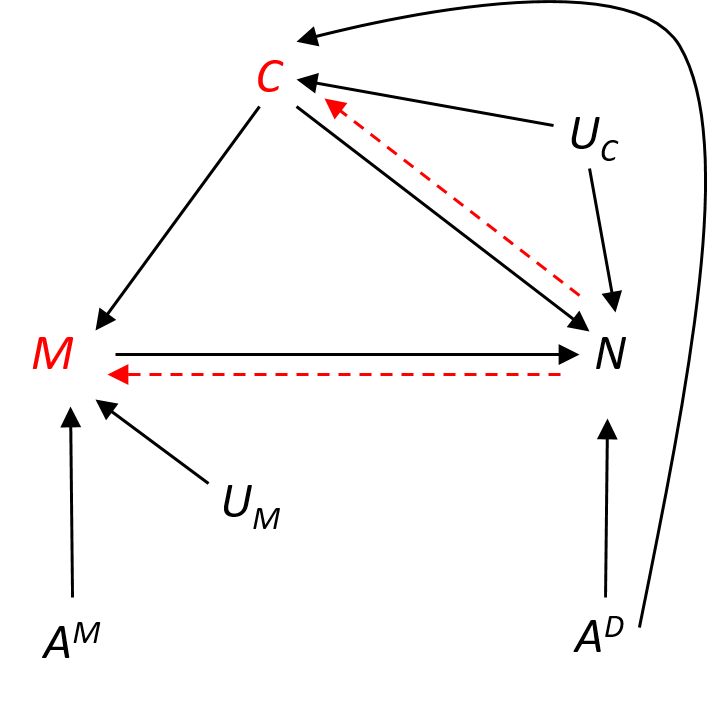}
                \caption{Applying the directed path version of  $\delta$-separation to the graph in Figure \ref{local_independence_partial}, thus proving the local independence of $A^M$ on $N$ given $C$, $M$ and $A^D$. Notice that the directed nodes out of $N$ to $M$ in Figure \ref{local_independence_partial} (here marked in red) are removed as required by the $\delta$-separation rule. One sees that nodes $C$ and $M$ (marked in red) together block all paths from $A^M$ to $N$. This corresponds to assumption A2.}
                \label{local_independence_partial_b}
        \end{subfigure}\hfill
        \begin{subfigure}[t]{0.45\textwidth}
                \centering
                \includegraphics[width=\linewidth]{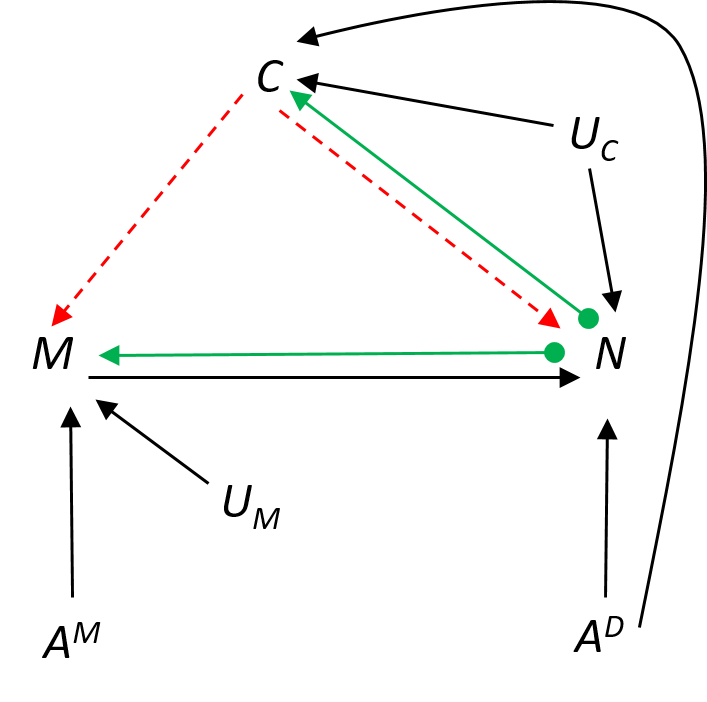}
                \caption{Applying the directed path version of  $\delta$-separation to the graph in Figure \ref{local_independence_partial}, thus proving the local independence of $A^M$ on $C$ given $N$, $M$ and $A^D$. Notice that the directed nodes out of $C$ in Figure \ref{local_independence_partial} (here marked in red) are removed as required by the $\delta$-separation rule. One sees that nodes $N$ and $M$ together block all paths from $A^M$ to $C$. This corresponds to assumption A3.}
                \label{local_independence_partial_c}
        \end{subfigure}
        \caption{Example graphs}\label{fig:li1}
\end{figure}

\section{A mediational g-formula}
\label{totpr}

In this section, we provide a mediational g-formula using Assumptions A1, A2 and A3. We shall express survival as an explicit function of the treatment, mediators and covariates, and then integrate out the mediators and covariates. We allow the presence of unmeasured confounders as in Section \ref{unmeasured confounders}.

Using the time ordering of the variables (Section \ref{basic model}, see also Figure \ref{DAG_confounders1}), we can write the joint distribution as a product of conditional probabilities. For $t_{k} < t \leq t_{k+1}$ we shall write this distribution conditionally on separate interventions on the treatment components expressed by $\mbox{do} (A^D=a, A^M=a^{\ast})$:

\begin{align}
& P(T>t,\overline{M}_{k}=\overline{m}_{k},\overline{C}_{k}=\overline{c}_{k} \thinspace\mid \thinspace \mbox{do} (A^D=a, A^M=a^{\ast})) \label{bigform}\\
& = P(T>t, T>t_1, T>t_2, \ldots,T>t_k,\overline{M}_{k}=\overline{m}_{k},\overline{C}_{k}=\overline{c}_{k} \thinspace\mid \thinspace \mbox{do} (A^D=a, A^M=a^{\ast})) \nonumber \\
& =P(T > t\mid T > t_{k},\overline{M}_{k}=\overline{m}_{k},\overline{C}_{k}=\overline{c}_{k}, \mbox{do} (A^D=a, A^M=a^{\ast})) \nonumber \\
& \times \prod_{i: t_i < t} \Big\{P(T > t_i \thinspace\mid \thinspace T > t_{i-1},\overline{M}_{i-1}=\overline{m}_{i-1},\overline{C}_{i-1}=\overline{c}_{i-1},
\mbox{do} (A^D=a, A^M=a^{\ast})) \nonumber \\
& \times P(M_{i}=m_{i}\thinspace\mid \thinspace T > t_i,\overline{M}_{i-1}=\overline{m}_{i-1},\overline{C}_{i-1}=\overline{c}_{i-1},
\mbox{do} (A^D=a, A^M=a^{\ast})) \nonumber \\
& \times P(C_{i}=c_{i}\thinspace\mid \thinspace T > t_i,\overline{M}_{i}=\overline{m}_{i},\overline{C}_{i-1}=\overline{c}_{i-1},
\mbox{do} (A^D=a, A^M=a^{\ast})) \nonumber \Big\}. 
\end{align}

We assume throughout that the conditional probabilities we need are well-defined, and, following \cite{lok2004estimating}, we use $-1$ as an `empty' index, that is, e.g., $P(M_{i}=m_{i}\mid T > t_i,\overline{M}_{i-1}=\overline{m}_{i-1},\overline{C}_{i-1}=\overline{c}_{i-1}, \mbox{do} (A^D=a, A^M=a^{\ast}))$ is to be read as $P(M_{0}=m_{0} \thinspace\mid \thinspace T > t_0, 
\mbox{do} (A^D=a, A^M=a^{\ast})) = P(M_{0}=m_{0} \thinspace\mid \thinspace 
\mbox{do} (A^D=a, A^M=a^{\ast}))  $ when $i=0$.

Following the approach of \cite{didelez2019defining} we shall show how the factors in the product in Equation (\ref{bigform}) can be simplified. We start with the factors concerning survival:

\begin{align*}
& P(T > t\mid T > t_{k},\overline{M}_{k}=\overline{m}_{k},\overline{C}_{k}=\overline{c}_{k}, \mbox{do} (A^D=a, A^M=a^{\ast})) \\
& = P(T > t\mid T > t_{k},\overline{M}_{k}=\overline{m}_{k},\overline{C}_{k}=\overline{c}_{k}, \mbox{do} (A^D=a, A^M=a)) \\
& = P(T > t\mid T > t_{k},\overline{M}_{k}=\overline{m}_{k},\overline{C}_{k}=\overline{c}_{k}, \mbox{do} (A=a)) 
\end{align*}

The first equality follows from Assumption A2 since $T>t$ is then d-separated from $A^M$. The last equality follows from the assumption of Property P1 of \cite{didelez2019defining}. Applying a similar argument to the factor concerning mediation, using Assumption A1, we get:

\begin{align*}
& P(M_{i}=m_{i}\thinspace\mid \thinspace T > t_i,\overline{M}_{i-1}=\overline{m}_{i-1},\overline{C}_{i-1}=\overline{c}_{i-1},
\mbox{do} (A^D=a, A^M=a^{\ast})) \\
& = P(M_{i}=m_{i}\thinspace\mid \thinspace T > t_i,\overline{M}_{i-1}=\overline{m}_{i-1},\overline{C}_{i-1}=\overline{c}_{i-1},
\mbox{do} (A=a^{\ast})) 
\end{align*}

Finally, we consider the factor concerning covariates in Equation (\ref{bigform}). Using Assumption A3, we find:

\begin{align}
&  P(C_{i}=c_{i}\thinspace\mid \thinspace T > t_i,\overline{M}_{i}=\overline{m}_{i},\overline{C}_{i-1}=\overline{c}_{i-1},
\mbox{do} (A^D=a, A^M=a^{\ast})) \label{separa} \\
& = P(C_{i}=c_{i}\thinspace\mid \thinspace T > t_i,\overline{M}_{i}=\overline{m}_{i},\overline{C}_{i-1}=\overline{c}_{i-1},
\mbox{do} (A=a))  \nonumber
\end{align}

Summing over $\overline{m}_{k}$ and $\overline{c}_{k}$ in Equation (\ref{bigform}), and using the results developed in this section, we get the following (for continuous mediators the sum would be substituted with integrals).

\begin{align}
& P(T>t \thinspace\mid \thinspace \mbox{do} (A^D=a, A^M=a^{\ast})) \label{bigform1}\\
& =  \sum_{\overline{m}_k,\overline{c}_k} \Big[  P(T>t,\overline{M}_{k}=\overline{m}_{k},\overline{C}_{k}=\overline{c}_{k} \thinspace\mid \thinspace \mbox{do} (A^D=a, A^M=a^{\ast})) \Big] \nonumber \\
& = \sum_{\overline{m}_k,\overline{c}_k} \Big[   P(T > t\mid T > t_{k},\overline{M}_{k}=\overline{m}_{k},\overline{C}_{k}=\overline{c}_{k}, \mbox{do} (A=a)) \nonumber \\
& \times \prod_{t_i \leq t} \Big\{P(T > t_i \thinspace\mid \thinspace T > t_{i-1},\overline{M}_{i-1}=\overline{m}_{i-1},\overline{C}_{i-1}=\overline{c}_{i-1},
\mbox{do} (A=a)) \nonumber \\
& \times P(M_{i}=m_{i}\thinspace\mid \thinspace T > t_i,\overline{M}_{i-1}=\overline{m}_{i-1},\overline{C}_{i-1}=\overline{c}_{i-1},
\mbox{do} (A=a^{\ast})) \nonumber \\
& \times  P(C_{i}=c_{i}\thinspace\mid \thinspace T > t_i,\overline{M}_{i}=\overline{m}_{i},\overline{C}_{i-1}=\overline{c}_{i-1},
\mbox{do} (A=a))  \Big\} \Big]  .
\end{align}

We define $r(t)=i$ when $t_{i}\leq t<t_{i+1}$ and rewrite the equation by introducing the hazard rate $\lambda(t\mid \overline{M}_{r(t)}=\overline{m}_{r(t)},\overline{C}_{r(t)}=\overline{c}_{r(t)}, A=a)$. Using Assumption A0 we can then write for $t_{i}\leq t<t_{i+1}$:

\begin{align*}
& P(T > t\mid T > t_{i},\overline{M}_{i}=\overline{m}_{i}, \overline{C}_{i}=\overline{c}_{i}, A=a) \\
& = \exp \left\{-\int_{t_{i}}^t \lambda(s\mid \overline{M}_{r(s)}=\overline{m}_{r(s)}, \overline{C}_{r(s)}=\overline{c}_{r(s)}, A=a) ds  \right\}
\end{align*}

Hence, the mediational g-formula can be written as follows (using that $A$ is randomly assigned):

\begin{equation}
\begin{split}
& P(T>t\mid \mbox{do} (A^D=a, A^M=a^{\ast})) \\
&  =  \sum_{\overline{m}_k,\overline{c}_k} \Big[ \exp \big\{-\int_0^t \lambda(s\mid \overline{M}_{r(s)}=\overline{m}_{r(s)}, \overline{C}_{r(s)}=\overline{c}_{r(s)}, A=a) ds  \big\}  \label{medg}\\ 
&\times  \prod_{t_i \leq t} \big\{  P(M_{i}=m_{i}\mid T > t_i,\overline{M}_{i-1}=\overline{m}_{i-1},\overline{C}_{i-1}=\overline{c}_{i-1}, A=a^{\ast}) \\
& \times  P(C_{i}=c_{i}\thinspace\mid \thinspace T > t_i,\overline{M}_{i}=\overline{m}_{i},\overline{C}_{i-1}=\overline{c}_{i-1},
A=a)  \big\} \Big].    
\end{split}
\end{equation}

The present mediational g-formula is stated in terms of hazard functions. These can be estimated under censoring and we assume non-informative (independent) censoring. Formula (\ref{medg}) can be used with different hazard rate models.

This mediational g-formula is similar to the result in Section 3.2 of \cite{vansteelandt2019mediation}, but with somewhat different assumptions. Note also that by setting $a=a^{\ast}$ in the mediational g-formula (\ref{medg}) we get the standard g-computation formula \citep{robins1986new,lok2004estimating}:

\begin{equation}
\begin{split}
& P(T>t\mid \mbox{do} (A=a)) \\
&  =  \sum_{\overline{m}_k,\overline{c}_k} \Big[ \exp \big\{-\int_0^t \lambda(s\mid \overline{M}_{r(s)}=\overline{m}_{r(s)}, \overline{C}_{r(s)}=\overline{c}_{r(s)}, A=a) ds  \big\}  \label{gcomp}\\ 
&\times  \prod_{t_i \leq t} \big\{  P(M_{i}=m_{i}\mid T > t_i,\overline{M}_{i-1}=\overline{m}_{i-1},\overline{C}_{i-1}=\overline{c}_{i-1},A=a) \\
& \times  P(C_{i}=c_{i}\thinspace\mid \thinspace T > t_i,\overline{M}_{i}=\overline{m}_{i},\overline{C}_{i-1}=\overline{c}_{i-1},
A=a) \big\} \Big].    
\end{split}
\end{equation}

\section{Estimation of direct and indirect effects}

Estimation of the probabilities in the mediational g-formula can be carried out along the lines indicated by \cite{vansteelandt2019mediation} or \cite{lin2017mediation}. An alternative is an additive hazard model for mediation analysis as presented by \cite{aalen2020time}.

In particular the approach of \cite{vansteelandt2019mediation} is very general and could be recommended in many cases. However, as a useful and simple alternative we will present a structural cumulative survival model. We will derive explicit estimates in the case when there is no unmeasured confounder $U^C$.

\subsection{A semiparametric additive hazards model}
\label{semipar}

 We shall apply the semiparametric additive hazards model, see \cite{martinussen2017instrumental,vansteelandt2018survivor,dukes2019doubly}. This is a much more flexible model than the ordinary additive hazards model. A general formulation of the model as a structural cumulative survival model (SNCSTM) is given in \cite{seaman2020adjusting} and \cite{seaman2021using}. A recent application is given by \cite{ying2021new}. The model allows a very simple derivation of direct and indirect effects.
 
 In our setting the hazard rate is put on the following form:

\begin{equation}
  \lambda(t\mid \overline{M}_{r(t)}=\overline{m}_{r(t)},\overline{C}_{r(t)}=\overline{c}_{r(t)}, A=a) 
  = \rho_ta + \lambda_1(t\mid \overline{m}_{r(t)},\overline{c}_{r(t)})
  \label{haz1}
\end{equation}
for some parameter function $\rho_t$ and a function $\lambda_1$ of the mediator processes and covariate processes, seen as time-dependent covariates. Note that the indirect part expressed through the $\lambda_1$-function could have any functional form, e.g. that of a Cox model. The hazard rate could depend on the past mediator values in many different ways; some examples are given next:

\begin{itemize}
    \item The hazard rate depends only on the last observed mediator value.
    \item The hazard rate depends on the mean of all previously measured values of mediators and covariates.
    \item The hazard rate depends on some weighted average of the previously measured values of the mediator, e.g., giving higher weights to more recent values.
    \item The hazard rate depends on a two-dimensional summary statistic: One part is a weighted mean of early observed mediator values, while the other is a weighted mean of the remaining values.
\end{itemize}

We introduce the hazard rate into Equation (\ref{haz1}) in the mediational g-formula given in Equation (\ref{medg}). Assume $t_{k}\leq t<t_{k+1}$, then:

\begin{equation*}
\begin{split}
& P(T>t\mid\mbox{do} (A^D=a, A^M=a^{\ast})) \\
&  =  \sum_{\overline{m}_k,\overline{c}_k} \Big[  \exp \big\{-\int_0^t (\rho_{s}a + \lambda_1(s\mid\overline{m}_{r(s)},\overline{c}_{r(s)}))ds \big\} \\ 
&\times  \prod_{t_i \leq t} \big\{  P(M_{i}=m_{i}\mid T > t_i,\overline{M}_{i-1}=\overline{m}_{i-1},\overline{C}_{i-1}=\overline{c}_{i-1},\mbox{do} (A=a^{\ast})) \\
& \times  P(C_{i}=c_{i}\thinspace\mid \thinspace T > t_i,\overline{M}_{i}=\overline{m}_{i},\overline{C}_{i-1}=\overline{c}_{i-1},
\mbox{do} (A=a)) \big\}\Big]
\end{split}
\end{equation*}

\noindent We now assume that we are in the setting of Example \ref{gen}, see also Figure \ref{fig:li2}.

\subsection{A simple analysis when there is no unmeasured confounder $U^C$}

We assume that Figure \ref{local_independence_confounders} represents the causal structure, but with no unmeasured confounder $U^C$. In this case, the above  mediational g-formula can be written as:

\begin{equation*}
\begin{split}
& P(T>t\mid \mbox{do} (A^D=a, A^M=a^{\ast})) \\
&  = \exp \big\{-\int_0^t \rho_{s}(a-a^{\ast})ds\big\} \\
&\times \sum_{\overline{m}_k,\overline{c}_k} \Big[  \exp \big\{-\int_0^t (\rho_{s}a^{\ast} +
\lambda_1(s\mid \overline{m}_{r(s)},\overline{c}_{r(s})) ds\\ 
&\times  \prod_{t_i \leq t} \big\{  P(M_{i}=m_{i}\mid T > t_i,\overline{M}_{i-1}=\overline{m}_{i-1},\overline{C}_{i-1}=\overline{c}_{i-1},\mbox{do} (A=a^{\ast})) \\
& \times P(C_{i}=c_{i}\thinspace\mid \thinspace T > t_i,\overline{M}_{i}=\overline{m}_{i},\overline{C}_{i-1}=\overline{c}_{i-1},
\mbox{do} (A=a^\ast))\Big]\\
&  = \exp \big\{-\int_0^t \rho_{s}(a-a^{\ast})ds\big\} \times P(T>t\mid \mbox{do} (A=a^{\ast}))
\end{split}
\end{equation*}

\noindent The first equation follows from taking an exponential factor containing $a-a^{\ast}$ out of the large sum and applying Rule 3 of do-calculus \citep{pearl09} to the underlying causal graph (note that the $d$-separations needed for this rule hold in the original graph and therefore also in the graph used in Rule 3). The last step follows from Property P1 as the sum then equals $P(T>t\mid \mbox{do} (A=a^{\ast})$ since only $a^{\ast}$ is in the sum, and not $a$.

Using the g-computation formula in Equation (\ref{gcomp}) we can write:

\begin{align}
& Q(t;a,a^{\ast})=P(T>t\mid \mbox{do} (A^D=a, A^M=a^{\ast})) \nonumber \\
& = \exp \big\{-\int_0^t \rho_{s}(a-a^{\ast})ds\big\} \times P(T>t\mid \mbox{do} (A=a^{\ast}))
\label{bigmed}
\end{align}

We define direct and indirect effects in terms of relative survival functions and denote these as $SIE$ and $SDE$ respectively:

\begin{align*}
SIE(t)  &  = Q(t;a,a)/Q(t;a,a^{\ast})=\exp  \big\{ (a-a^{\ast})\int_0^t\rho_{s}ds\big\} \frac{ P(T>t\mid \mbox{do} (A=a))}{ P(T>t\mid \mbox{do} (A=a^{\ast}))}   \\
SDE(t)  &  = Q(t;a,a^{\ast})/Q(t;a^{\ast},a^{\ast})= \exp \big\{(a^{\ast}-a) \int_0^t\rho_{s}ds \big\} \end{align*}
The total effect is then $P(T>t\mid A=a) / P(T>t\mid A=a^{\ast})$. Note that Assumption A0 implies that $P(T>t\mid A=a)=P(T>t\mid \mbox{do}(A=a))$.

\subsection{An estimation approach based on the Cox model}
\label{simple1}

We shall consider the following setting: We assume the model given in Figure \ref{fig:li2}. and we also assume that there is no unmeasured confounder $U^C$, see Section \ref{semipar}. When the treatment $A$ is randomised, the survival functions $P(T>t\mid \mbox{do} (A=a))$ and $P(T>t\mid \mbox{do} (A=a^{\ast}))$ can be estimated by Kaplan-Meier survival functions. What remains is then to estimate the parameter function $\rho_{t}$. If $\lambda_1(t\mid \overline{m}_{r(t)},\overline{c}_{r(t)})$ is on an additive form in the covariates, then the estimation can be made with an additive hazards model. This is a straightforward approach.

However, other functional forms for $\lambda_1(t\mid \overline{m}_{r(t)},\overline{c}_{r(t)})$ may also be used, e.g., a Cox model in which case the model in (\ref{haz1}) corresponds to the proportional excess hazards model of \cite{martinussen2007}. We will illustrate this approach here (details are given in Appendix \ref{cox-est} in the supplementary materials). We assume that the two possible values of the exposure variable $A$ are $a=0$ and $a=1$. Assume that $a=0$ indicates the group with the smallest hazard. For this group of individuals the hazard function $\lambda_1(t\mid \overline{m}_{r(t)},\overline{c}_{r(t)})$ can then be directly estimated from the Cox model.

More specifically, we define the Cox model as follows:

\begin{equation}
\label{aalen:cox}
\lambda_1(t\mid \overline{m}_{r(t)},\overline{c}_{r(t)}) = \psi(t) \exp{({\gamma}\thinspace \mathbf{z}_t)}
\end{equation}
Here $\mathbf{z}_t$ is defined as a time-dependent (vector) quantity based on the mediators and covariates. For example, there could be fixed covariates defined at time zero. In a simple case there could be one mediator defined, e.g., as the average over relevant past mediator values at a given time. The estimated (vector) parameter is the quantity $\widehat{\boldsymbol{\gamma}}$.

Let $N_0(t)$ and $N_1(t)$ be the counting processes for groups $a=0$ and $a=1$, respectively. The Breslow estimator of the underlying cumulative hazard (integral of $\psi(t)$) is given as follows:

\begin{equation}
\widehat{\Lambda}_0(t)= \int_0^t \frac{ dN_0(u)} 
{\sum_{l=1}^n (Y_l(u,a=0)\exp{(\widehat{\boldsymbol{\gamma}}
\thinspace \mathbf{z}_u(l))}}
\end{equation}
where $Y_l(u,a=0)$ is 1 when $a=0$ and the individual is still in the risk set, and zero otherwise. The quantity $\mathbf{z}_t(l)$ is the time-dependent covariate in the Cox model for individual $l$ at time $t$.

In the next step we use the individuals from the group with $a=1$ with $Y_l(u,a=1)$ being defined for this group in analogy with above. For each individual we insert the respective values of $\overline{m}_{r(t)}$ and $\overline{c}_{r(t)}$ into the Cox model estimated above for $\lambda_1(t\mid \overline{m}_{r(t)},\overline{c}_{r(t)})$, see formula (\ref{aalen:cox}). The next formula is derived in Appendix \ref{cox-est}:

\begin{equation}
\label{aalen:pripp}
 \int_0^t \frac{ \sum_{l=1}^n (Y_l(u,a=1)\exp{(\widehat{\boldsymbol{\gamma}}
\thinspace \mathbf{z}_u(l)})} 
{\sum_{l=1}^n (Y_l(u,a=1)
\sum_{l=1}^n (Y_l(u,a=0)\exp{(\widehat{\boldsymbol{\gamma}}
\thinspace \mathbf{z}_u(l)})} dN_0(u)
\end{equation}
where $Y_l(u,a=1)$ is 1 when $a=1$ and the individual is still in the risk set, and zero otherwise. 
Finally, we subtract the estimator in Equation (\ref{aalen:pripp}) from the Nelson-Aalen estimator calculated for the group $a=1$. The result is an estimate of the cumulative function $\int_0^t \rho_s ds$. For more details, see Appendix \ref{cox-est}.

\begin{equation}
\begin{split}
& \widehat{R(t)} = \int_0^t \frac{dN_1(u)}{\sum_{l=1}^n (Y_l(u,a=1)) }  \\
&  - \int_0^t \frac{ \sum_{l=1}^n (Y_l(u,a=1)\exp{(\widehat{\boldsymbol{\gamma}}
\thinspace \mathbf{z}_u(l)})} 
{\sum_{l=1}^n (Y_l(u,a=1)
\sum_{l=1}^n (Y_l(u,a=0)\exp{(\widehat{\boldsymbol{\gamma}}
\thinspace \mathbf{z}_u(l)})} dN_0(u)   
\end{split}
\end{equation}

\section{Data application}
\label{application}
To illustrate the suggested estimation procedure we use the liver cirrhosis dataset which is freely available as part of the timereg package in R \citep{martinussen2007}. 
The liver cirrhosis data stem from a randomised trial conducted in the Copenhagen area between 1962 and 1969, including patients with histologically verified liver cirrhosis who were randomised to either receive prednisone or placebo at baseline. They were scheduled for regular visits after 3, 6 and 12 months and yearly thereafter and followed until death, end of study on October 1st, 1974 or loss to follow-up. For detailed description of the trial see for example \cite{ schlichting1983csl}. 
Several versions of these data have previously been used to illustrate various methods for modelling time-to-event data with time-dependent covariates.  \cite{ andersen1993statistical} use it at several occasions in their book, especially to illustrate multistate models. More recently, the data were used in  \cite{fosen06b} to exemplify dynamic path analysis using the additive hazards model. In the following, we address one of the questions also addressed in \cite{fosen06b}, namely, whether and to which extent the effect of prednisone on time to death is mediated through prothrombin.

In the theory outlined in the previous sections, the separation of intervention effects into different components ($A^M$ and $A^D$) that affect the outcome via different pathways is key. 
Hence, in any given analysis the underlying biological mechanism should be considered carefully to gauge whether this assumption is sensible. 

A primary trigger for end-stage liver disease (cirrhosis), from which patients in our sample suffer, is excessive systemic inflammation. Glucocorticoids, such as prednisone, have the ability to promptly suppress excessive inflammatory reactions and immune response. Hence, glucocorticoids have been used in liver cirrhosis treatment for decades despite remaining uncertainties regarding all molecular mechanisms in action \citep{xue2019gluco}. 

Prothrombin is a protein synthesised by the liver and is among several other substances known as a clotting (coagulation) factor. In previous studies prothrombin has been considered as a measure of liver function \citep{andersen1991}.

Considering recent reviews on the effect of glucocorticoids on coagulation factors  \citep{ vanzaane2010gluco} and on coagulation abnormalities in cirrhotic patients \citep{ mucino2013caogulation}, it is not unreasonable to investigate a potential mediating effect through prothrombin. 
However, given the aforementioned uncertainties regarding the molecular mechanisms activated by glucocorticoid treatment, some uncertainty remains whether the different pathways through which treatment works could be activated by distinct treatment components ($A^M$ and $A^D$).

\subsection{Analysis of the effect of prednisone on death through prothrombin}

For our illustration, we used information on 446 patients, 226 in the active treatment group and 220 in the placebo group, among whom 270 deaths (n=131 active group, n=139 placebo group) were observed. Following \citet{fosen06b}, we transformed the repeatedly measured prothrombin values by setting values of 70 or above to zero and subtracting 70 if the original values were below 70. As prothrombin values above 70 are considered ‘normal’, the transformed prothrombin values are zero for prothrombin values considered to be normal and negative for lower-than-normal values. 

We evaluated direct and indirect effects considering (i) the most recent value of prothrombin as well as (ii) an average of all previously measured values of prothrombin considering age and sex as confounding variables. Figure \ref{local_independence_1} represents the assumptions of this mediation analysis graphically when $A$ is treatment, $M$ is prothrombin value (defined as in (i) or (ii)), $N$ is outcome and $C$ is age and sex. The estimated direct, indirect and total effects on a relative survival scale using the estimation procedure outlined in Section \ref{simple1} are shown in Figure \ref{rel_surv}. 
The black lines display effect estimates when using the most recent value of prothrombin as mediating variable. The gray lines indicate the corresponding confidence intervals based on 1000 bootstrap samples. The blue lines display the effect estimates when considering the average of all previously measures of prothrombin as the mediating variable.

\begin{figure}
\centering
\makebox[\linewidth]{\includegraphics[scale=0.5]{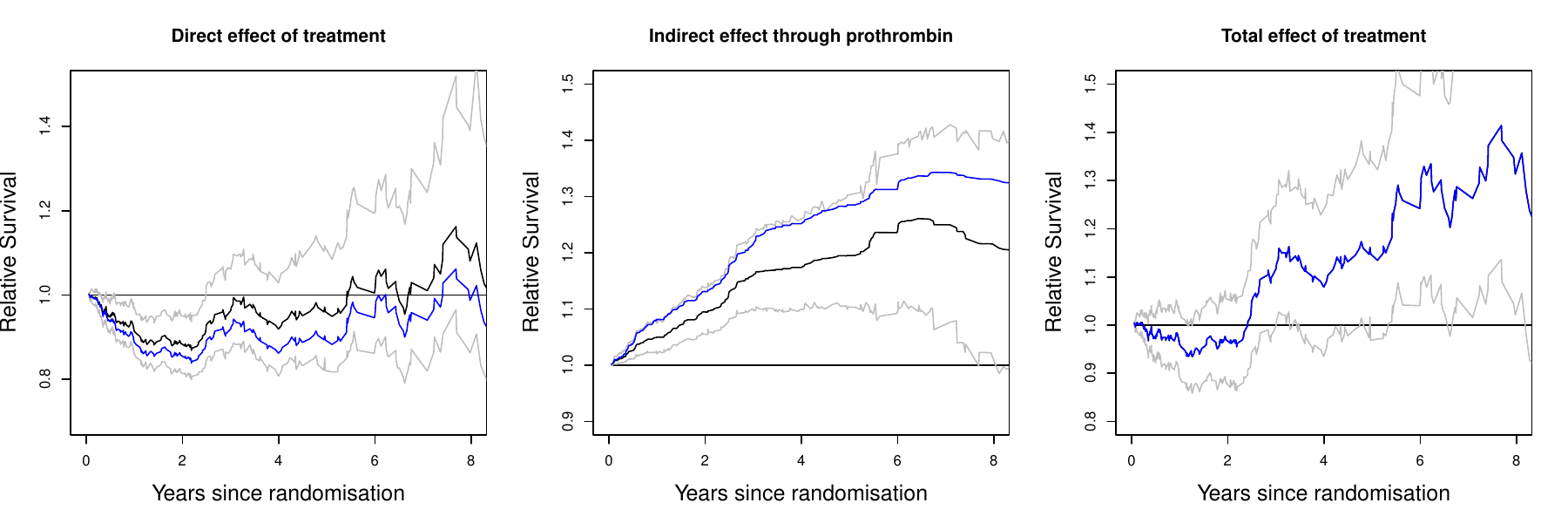}}

  \caption{Estimated direct, indirect and total effect on the relative survival scale comparing the effect of prednisone vs. placebo on mortality considering prothrombin as a mediator. The black lines show the effect estimates considering the most recent prothrombin value, corresponding 95\% confidence intervals are plotted in gray. The blue lines indicate the effects using the average of all previous prothrombin values as the mediator.}
   \label{rel_surv}
	\end{figure}

While the total effect of prednisone seems fairly constant over the first 2.5 years, the slope indicates a beneficial effect thereafter. Based on the obtained direct and indirect effects it seems that this beneficial effect is mainly mediated through prothrombin. As indicated by the blue lines, it appears that incorporating information on the whole mediator history explains more of the total effect than simply using the most recent value (black lines).

\citet{fosen06b} also considered the most recent prothrombin value as a mediator and report comparable behaviors of the total, direct and indirect effects on the cumulative hazards scale when using the additive hazards model. However, they had more information on clinical and non-clinical characteristics available to facilitate more detailed confounding adjustment and to restrict their analysis to 386 patients without ascites. Hence, our results are not directly comparable and should be interpreted cautiously from a medical perspective given the limited number of considered confounders. 

Taken together, the presented analysis mainly aims to illustrate the potential of the suggested estimation approach for this type of data structure and research questions. Corresponding R code is available on GitHub.

\section{Discussion}
We have developed a general approach for mediation analysis in survival models following the ideas of \cite{didelez2019defining}. Clearly, the treatment separation approach is somewhat speculative, which can in fact be said about almost all causal models. However, it does relate to biological hypotheses that could in principle be studied in more detail, as opposed to the nested counterfactuals which may not exist in any meaningful sense \citep{aalen12armitage}.  Undoubtedly, causal thinking is useful. It tells us which measures could have validity in a causal setting. Mediation analysis, in particular, is complex and a mathematical framework is necessary to understand the ideas.

Assumptions like A1, A2 and A3 in effect make statements about biological reality, so one has to be careful. However, one can also assume a pragmatic point of view: Mathematically, the assumptions represent orthogonality statements about the direct and mediated effects. So, one can say that the approach yields an orthogonalisation which may be a useful analysis, even if the biological reality is more complex. If one has more detailed knowledge about the biology one could also use more complex models as shown in Example \ref{partial}. The various models could also be used in sensitivity analyses. These are very important for evaluating the actual validity of causal models. They are not included here, but the basic mathematical approach of sensitivity analyses could follow the ideas presented here.

An important contribution of the present paper is our development and use of extended local
independence graphs as representations of the assumptions that facilitate the mediation analysis. These graphs are compact representations of the causal structure from which sufficient conditions for the mediation analysis may be read off. This also extends to time-continuous models as illustrated by an example in the appendix and future work may extend this framework to other classes of continuous-time models.

\section{Software}
The R code corresponding to the basic data example in Section \ref{application} is available on GitHub: \url{https://github.com/susistroh/time_depenent_mediator_mogensen_aalen_strohmaier}

\section{Supplementary materials}
Appendices \ref{app:granger}-\ref{ssec:continuousTimeExample} contain further 
results as well as proofs that were omitted in the main manuscript.

\section{Funding}

This work was supported by Independent Research Fund Denmark [DFF-International Postdoctoral Grant 0164-00023B to SWM]. 

\section{Acknowledgements}

The authors thank Vanessa Didelez and Mats Stensrud for their helpful comments on an earlier version of this manuscript. SWM is a member of the ELLIIT Strategic Research Area at Lund
University.

\appendix

\section{Granger Causality}
\label{app:granger}

In this section, we provide an alternative definition of Granger causality. The 
two definitions are equivalent and are essentially different parameterizations 
of the set of triples $(A,B,C)$ for which we can ask if $A$ is Granger 
non-causal for $B$ given $C$.

\begin{definition}[Granger non-causality, alternative]
	Let $X^V$ be a multivariate time series with an index set $V$. Let $A$ and 
	$B$ be disjoint subsets of $S \subseteq V$. We say that $X^A$ is Granger 
	non-causal for $X^B$ \emph{relative to} $X^{S}$ if for all $t\geq 1$ we 
	have that $\overline{X}_{t-1}^A$ and $X_t^B$ are conditionally independent 
	given $\overline{X}_{t-1}^{S\setminus A}$.
\end{definition}

\section{Estimating model in Section \ref{simple1}}
\label{cox-est}

Let $N_0(t)$ and $N_1(t)$ be the counting processes for groups $a=0$ and $a=1$, 
respectively. The counting process representation of $N_0(t)$ is given as 
follows:

\begin{equation}
\label{aalen:km}
dN_0(t) = \psi(t)dt \sum_{l=1}^n (Y_l(t,a=0)\exp{(\boldsymbol{\gamma} 
	\thinspace \mathbf{z}_t(l))} + dM_0(t)
\end{equation}
where $Y_l(t,a=0)$ is 1 when $a=0$ and individual $l$ is still in the risk set 
at time $t$, and zero otherwise. The history is defined as all information 
available up to a given time, and $M_0(t)$ is a martingale with respect to this 
history. The quantity $\mathbf{z}_t(l)$ is the time-dependent covariate in the 
Cox model for individual $l$ at time $t$.

In the next step we use the individuals from the group with $a=1$. For each 
individual we insert the respective values of $\overline{m}_{r(t)}$ and 
$\overline{c}_{r(t)}$ into the Cox model estimated for $\lambda_1(t\mid 
\overline{m}_{r(t)},\overline{c}_{r(t)})$, with time-dependent covariate values 
denoted $\mathbf{z}_t(l)$. The martingale representation of $N_1(t)$ is given 
as follows

\begin{equation}
\label{aalen:dill}
dN_1(t) = \rho_t \sum_{l=1}^n (Y_l(t,a=1)) + \psi(t) dt \sum_{l=1}^n 
(Y_l(t,a=1)\exp{(\boldsymbol{\gamma}
	\thinspace \mathbf{z}_t(l))} + dM_1(t)
\end{equation}
where $Y_l(t,a=1)$ is 1 when $a=1$ and the individual is still in the risk set, 
and zero otherwise.

The quantity $\psi(t)dt$ can be estimated from Equation (\ref{aalen:km}) as 
follows:

\begin{equation}
\psi(t)dt \approx \frac{dN_0(t)}{\sum_{l=1}^n 
(Y_l(t,a=0)\exp{(\boldsymbol{\gamma} 
		\thinspace \mathbf{z}_t(l))}}
\end{equation}
This is the Breslow estimator. Substituting this into Equation 
(\ref{aalen:dill}) yields:

\begin{equation}
\begin{split}
& \frac{dN_1(t)}{\sum_{l=1}^n (Y_l(t,a=1)) } = \rho_t  \label{kompli} \\
&  +  \frac{ \sum_{l=1}^n (Y_l(t,a=1)\exp{(\boldsymbol{\gamma}
		\thinspace \mathbf{z}_t(l)})} 
{\sum_{l=1}^n (Y_l(t,a=1)
	\sum_{l=1}^n (Y_l(t,a=0)\exp{((\boldsymbol{\gamma}
		\thinspace \mathbf{z}_t(l)})} dN_0(t)   
\end{split}
\end{equation}
When integrated over $t$ the left hand side of equation (\ref{kompli}) yields 
the steps of the Nelson-Aalen estimator calculated for the group $a=1$. 
Integrating the second line of equation (\ref{kompli}) yields a stochastic 
integral for group $a=0$. When subtracting this from the Nelson-Aalen estimator 
an estimate is derived for the cumulative function $R(t)=\int_0^t \rho_s ds$.

Above we have assumed that $\boldsymbol{\gamma}$ is a fixed quantity. In 
practice, it is estimated from the Cox model for the group $a=0$. The estimated 
quantity is termed $\widehat{\boldsymbol{\gamma}}$. Integrated into the 
estimator it yields

\begin{equation}
\begin{split}
& \widehat{R(t)} = \int_0^t \frac{dN_1(u)}{\sum_{l=1}^n (Y_l(u,a=1)) }  \\
&  - \int_0^t \frac{ \sum_{l=1}^n (Y_l(u,a=1)\exp{(\widehat{\boldsymbol{\gamma}}
		\thinspace \mathbf{z}_u(l)})} 
{\sum_{l=1}^n (Y_l(u,a=1)
	\sum_{l=1}^n (Y_l(u,a=0)\exp{(\widehat{\boldsymbol{\gamma}}
		\thinspace \mathbf{z}_u(l)})} dN_0(u).   
\end{split}
\end{equation}

\section{Proofs and additional results}
\label{app:proofs}

\begin{lemma}
	Let $(D,D_{t'})$ be a proper pair and let $A,C \subseteq V\cup W$ and 
	$B\subseteq V$ be disjoint. Let $E \subseteq V$ be a tail ancestral set 
	such that $\deta(B) \cap E= \emptyset$, $E\subseteq C$ and 
	$\pat(B)\subseteq E$. If $B$ is $\delta$-separated from A given C in $D$, 
	then $\nu_t^B$ is $d$-separated from $\Bar{\nu}_{t-1}^A$ given 
	$\bar{\nu}_{t-1}^{B\cup C}\cup \nu_t^E$ in $D_{t'}$, $0<t\leq t'$.
	\label{lem:extDeltasepToDsep}
\end{lemma}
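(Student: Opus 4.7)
The plan is to prove the contrapositive. Suppose $\nu_t^B$ is not $d$-separated from $\bar{\nu}_{t-1}^A$ given $S := \bar{\nu}_{t-1}^{B\cup C}\cup \nu_t^E$ in $D_{t'}$, witnessed by a shortest $d$-connecting walk $\pi$ between some $\nu_s^a$, $a \in A$, $s \leq t-1$, and $\nu_t^b$, $b \in B$. I aim to construct from $\pi$ a $\delta$-connecting walk in $(D^-)^B$ between $A$ and $B$ given $C$, contradicting the $\delta$-separation hypothesis.

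Let $\nu_{u^*}^{b^*}$ be the first $B$-vertex encountered on $\pi$ starting from $\nu_s^a$ (possibly $b^* = b$ and $u^* = t$), and let $\pi''$ be the initial segment of $\pi$ up to $\nu_{u^*}^{b^*}$. I first show that the last edge of $\pi''$ is directed into $\nu_{u^*}^{b^*}$ in $D_{t'}$. If not, the edge points out of $\nu_{u^*}^{b^*}$, making $\nu_{u^*}^{b^*}$ either a noncollider on $\pi$ (if interior) or forcing $\pi$ to begin from the $b$-side with an outgoing edge at $\nu_t^b$ (if $b^* = b$ and $u^* = t$). In either situation the walk must subsequently contain a collider at some time $\geq t$ in order to return to time $\leq t-1$; the descendants of such a collider lie at times $\geq t$, and the only nodes of $S$ at time $\geq t$ are $\nu_t^E$. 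Since $\deta(B) \cap E = \emptyset$, descendants of vertices reached from $B$ by chains of contemporaneous tailed edges cannot intersect $\nu_t^E$, so the collider fails to be in $\an^{+}_{D_{t'}}(S)$, blocking $\pi$ and yielding a contradiction.

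Project $\pi''$ to a walk in $D^-$ by sending each vertex $\nu_u^i \mapsto i$ and each edge $\nu_u^i \to \nu_{u'}^j$ to the corresponding edge $i \to j$ in $D^-$, whose existence is guaranteed by Definition \ref{def:rollingExt}. Since $\pi''$ has no interior $B$-vertices and ends with an edge pointing into $\nu_{u^*}^{b^*}$, the projection lies in $(D^-)^B$. Now verify the collider/noncollider conditions. Every collider $\nu_u^i$ on $\pi''$ lies in $\an^{+}_{D_{t'}}(S)$, so $i \in \an^{+}_{D}(B \cup C \cup E) = \an^{+}_{D}(B \cup C)$ because $E \subseteq C$; when $i \in \an^{+}_D(B) \setminus \an^{+}_D(C)$, reroute $\pi''$ via a directed path from $i$ to a new $B$-endpoint in $D$, truncating again so that interior $B$-vertices do not reappear. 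Every noncollider $\nu_u^i$ on $\pi''$ is not in $S$: if $u \leq t-1$ then $i \notin B \cup C$, so $i \notin C$; if $u = t$ then $i \notin E$, and the tail-ancestral property of $E$ together with $\pat(B) \subseteq E$ ensures $i \notin C$ whenever the contemporaneous configuration would otherwise produce a $C$-noncollider at a tailed parent of $B$ in the projection. The result is a $\delta$-connecting walk in $(D^-)^B$ between $A$ and $B$ given $C$, contradicting the hypothesis.

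The principal obstacle is the endpoint and contemporaneous-edge analysis near $\nu_t^b$; the three conditions on $E$ (tail ancestrality, $\pat(B) \subseteq E$, and $\deta(B) \cap E = \emptyset$) are precisely tailored to this analysis, conditioning on enough contemporaneous structure to block unwanted noncollider leakage into $B$ while keeping enough unconditioned to allow the reduction to shorter walks and cleanly projected $\delta$-connections.
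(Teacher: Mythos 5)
Your overall strategy is the same as the paper's: argue by contraposition, take a $d$-connecting walk in the unrolled graph, analyse its behaviour at lag $t$ using the hypotheses on $E$, project it down to the rolled graph, and repair bad colliders by rerouting along directed paths into $B$. Most of that skeleton is sound, and your collider treatment (reroute when a collider lies in $\an^{+}_D(B)\setminus\an^{+}_D(C)$) matches what the paper does explicitly in the proof of Theorem \ref{thm:markovContemp}; it only needs the standard refinements of taking the \emph{first} such collider and noting that its directed path to $B$ automatically avoids $C$.

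The genuine gap is in your noncollider verification, which is exactly the step where the three conditions on $E$ do their work. From openness of $\pi''$ given $S=\bar{\nu}_{t-1}^{B\cup C}\cup\nu_t^{E}$ you can only conclude, for a noncollider $\nu_t^i$ at lag $t$, that $i\notin E$; what you need for the projected walk to be $\delta$-connecting given $C$ is $i\notin C$. Your sentence asserting that tail ancestrality and $\pat(B)\subseteq E$ ``ensure $i\notin C$ whenever the contemporaneous configuration would otherwise produce a $C$-noncollider at a tailed parent of $B$'' is not an argument, and it misidentifies the danger: a lag-$t$ noncollider in $C\setminus E$ need not be a tailed parent of $B$ (it could, e.g., feed via contemporaneous edges into a lag-$t$ collider lying in $\nu_t^E$). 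The missing argument is the one the paper compresses into ``if $\nu_{s_l}^{k_l}$ is in time lag $t$, then $k_l$ is in $E$ as $E$ is tail ancestral, and this cannot be'': follow the outgoing edge of such a noncollider along the walk; after excluding vertices at lags $>t$ (see below) it stays at lag $t$ and terminates either at a lag-$t$ collider, which can be open only if its process lies in $\ant(E)\cup E=E$, or at the final $B$-vertex through a node of $\pat(B)\subseteq E$; tail ancestrality then pulls membership in $E$ back along the chain of tailed edges to the noncollider itself, so it lies in $\nu_t^E\subseteq S$, contradicting openness. Hence all noncolliders on $\pi''$ are at lags $\leq t-1$ and outside $B\cup C$, which is what the projection needs. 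Relatedly, you never rule out vertices of $\pi''$ at lags strictly greater than $t$: your ``collider at time $\geq t$'' argument is only deployed at the final $B$-vertex, yet a noncollider at lag $t+1$ whose process is in $C$ would pass your openness check in $D_{t'}$ while blocking the rolled walk; the fix is the same turning-point argument (any excursion above lag $t$ forces a collider at lag $>t$, which cannot be in $\an^{+}_{D_{t'}}(S)$). Finally, a small slip: when the first $B$-vertex is interior at a lag $\leq t-1$ with the last edge pointing out of it, your ``must return to time $\leq t-1$'' reasoning does not apply; the correct and simpler reason it cannot occur is that it would be a noncollider lying in $\bar{\nu}_{t-1}^{B}\subseteq S$.
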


The lemma is similar to a result in \cite{mogensen2020markov}, however, it 
generalises the result of that paper by including baseline variables and 
contemporaneous effects. It is also possible to state conditions such that 
$d$-separation implies $\delta$-separation, see, e.g., the supplementary 
material of \cite{mogensen2020markov}.

\begin{proof}
	Note that we use Definitions \ref{def:unrollingExt} and 
	\ref{def:rollingExt} in this proof. One should also note that $B$ and $E$ 
	are disjoint as $E\subseteq C$. The following proof applies if $D_{t'}$ is 
	an unrolled version of $D$ and if $D$ is a rolled version of $D_{t'}$.
	Consider a $d$-connecting path in $D_{t'}$, say between $\nu_{s_0}^i \in 
	\bar{\nu}_{t-1}^A$ and $\nu_{s_{m+1}}^j \in {\nu}_{t}^B$ given 
	$\bar{\nu}_{t-1}^{B\cup C}\cup \nu_t^E$,
	
	\begin{align*}
	\nu_{s_0}^i \sim \nu_{s_1}^{k_1} \sim \ldots \sim \nu_{s_m}^{k_m} \sim 
	\nu_{s_{m+1}}^j
	\end{align*}
	
	\noindent where $\sim$ denotes an edge. We can assume that 
	$\nu_{s_{k+1}}^j$ is the only node in $\bar{\nu}_t^B$. We have $s_{m+1} = 
	t$ and the path must be nontrivial as it spans different lags. It must also 
	have a head at $\nu_{s_{m+1}}^j$ as otherwise there is a collider at a lag 
	$t''$, $t\leq t''$, which would close the path as $\deta(B) \cap E= 
	\emptyset$. The previous node must be at an earlier lag $s < t$ as 
	otherwise it would be in the conditioning set as $\pat(B)\subseteq E$. 
	There exists a walk in the rolled graph
	
	\begin{align*}
	i \sim k_1 \sim \ldots \sim k_m \sim j
	\end{align*}
	
	\noindent such that every edge has the same orientation as in the original 
	walk. We have $i\in A$ and $j\in B$. If any non-endpoint node is in $B$, 
	then it must correspond to a node at lag $s$, $s<t$, and therefore a 
	collider. This means that we can find a subwalk with a head at $B$ such 
	that no non-endpoint node is in $B$ and therefore this subwalk is also in 
	$D^B$. If $k_l$ is a noncollider, then $\nu_{s_l}^{k_l}$ is also a 
	noncollider and $\nu_{s_l}^{k_l} \notin \bar{\nu}_{t-1}^{B\cup C}\cup 
	\nu_t^E$. If $\nu_{s_l}^{k_l}$ is in time lag $t$, then $k_l$ is in $E$ as 
	$E$ is tail ancestral, and this cannot be. This means that $k_l \notin B 
	\cup C$. If $k_l$ is a collider, then $\nu_{s_l}^{k_l}$ is also a collider, 
	and $k_l \in \an^\texttt{+}(B \cup C)$ as $E\subseteq C$. From this walk, 
	we can find a $\delta$-connecting path.
\end{proof}

\begin{proof}[Proof of Proposition \ref{prop:markov}]
	We will use Lemma \ref{lem:extDeltasepToDsep} with $E = \emptyset$. This 
	satisfies the conditions when there are no contemporaneous effects. Assume 
	that $B$ is $\delta$-separated from $A$ given $C$ in $D$ and let $t = 
	1, \ldots, {t'}$. Let $\bar{D}_t$ be the unrolled version of $D$ on $t$ 
	time lags. Then $\nu_t^B$ is $d$-separated from $\overline{\nu}_{t-1}^A$ 
	given 
	$\overline{\nu}_{t-1}^{B\cup C}$ in $\bar{D}_t$ (Lemma 
	\ref{lem:extDeltasepToDsep}), and therefore also in $D_{t'}$ as $D_{t'}$ 
	restricted to time points $0$ to $t$ is a subgraph of $\bar{D}_t$. The 
	assumption in the proposition implies that $X_t^B$ is 
	conditionally independent 
	of $\overline{X}_{t-1}^A$ given $\overline{X}_{t-1}^{B\cup C}$. This holds 
	for any $t = 1,\ldots, {t'}$ and therefore means that $A$ is Granger 
	non-causal for $B$ given $ C$ until time ${t'}$.
\end{proof}

\begin{proof}[Proof of Theorem \ref{thm:markovContemp}]
	Let $0 < t \leq {t'}$ and consider the unrolled graph of $D$ on $t$ lags, 
	$\bar{D}_t$. The graph $\bar{D}_t$ is a supergraph of $D_t$, but they are 
	not necessarily equal. Assume there is a $d$-connecting path between 
	$v_{s_0}^{k_0}$ and $v_{t}^{k_l}$ given $\bar{\nu}_{t-1}^{B\cup C}$ such 
	that $s_0<t$, $k_0\in A$ and $k_l \in (\antv(B) \cap C) \cup B$. Using 
	arguments similar to those in the proof of Lemma 
	\ref{lem:extDeltasepToDsep}, we will argue that we can find a path between 
	$A$ and $(\antv(B) \cap C) \cup B$ in $D$ such that every collider is in 
	$\an^\texttt{+}(C \setminus \antv(B))$ and no non-collider is in $C 
	\setminus \antv(B)$ and such that there is a head at the final node. There 
	are no nodes at lag $t$ in the conditioning set and therefore there exists 
	an $l'$ such that consecutive nodes $v_{s_{l'}}^{k_{l'}}, 
	v_{s_{l'+1}}^{k_{l'+1}}, \ldots, v_{s_{l-1}}^{k_{l-1}}, v_{t}^{k_{l}}$ are 
	at lag $t$ and no other nodes on the original path are at lag $t$. All the 
	nodes $k_{l'}, k_{l'+1}, \ldots, k_{l-1}$ are in $\antv(B)$ and the subpath 
	from $v_{s_{l'-1}}^{k_{l'-1}}$ to $v_{t}^{k_{l'}}$ in $\bar{D}_t$ is 
	directed. We consider the subpath from $v_{s_0}^{k_0}$ to the first node 
	$v_{t}^{k_{l''}}$ such that  $k_{l''}\in (\antv(B) \cap C) \cup B$ and the 
	corresponding walk
	
	\begin{align*}
	k_0 \sim k_1 \sim \ldots \sim k_{l''}
	\end{align*}
	
	\noindent in $D$. We can reduce  this to a path from $k_0$ to $(\antv(B) 
	\cap C) \cup B$ with a head at the final node. This path is nontrivial as 
	$A\cap \antv(B) = \emptyset$. If it contains a non-endpoint node which is 
	in ${(\antv(B) \cap C) \cup B}$, then this node is a collider (as it 
	corresponds to a lag $s$ such that $s < t$) and we can choose a subpath 
	which has a head at the final node and is in $D^{(\antv(B) \cap C) \cup 
	B}$. Otherwise, the path itself is in $D^{(\antv(B) \cap C) \cup B}$. If 
	$k_i$ is a noncollider, then it is also a noncollider on the $d$-connecting 
	path. Therefore, if $s < t$, $\nu_{s}^{k_i} \notin \bar{\nu}_{t-1}^{B\cup 
	C}$ and $k_i \notin C\setminus \antv(B)$. On the other hand, if $s = t$ 
	then $k_i\notin C$ as $k_{l'''} \in \antv(B)$ for every node 
	$\nu_t^{k_{l'''}}$ on the subpath at lag $t$. If $k_i$ is a collider, then 
	$\nu_{s}^{k_i} \in \an^\texttt{+}(\bar{\nu}_{t-1}^{B\cup C})$, so $k_i\in 
	\an^\texttt{+}(B\cup C)$. We can then find a path, $\omega$, from $A$ to 
	$(\antv(B) \cap C) \cup B$ in $D^{(\antv(B) \cap C) \cup B}$ such that no 
	noncollider is in $C$ and every collider is in $\an^\texttt{+}(C)$.
	
	If every collider on $\omega$ is also in $\an^\texttt{+}(C\setminus 
	\antv(B))$, then this path is $\delta$-connecting given  $C\setminus 
	\antv(B)$. Otherwise, consider the subpath from $A$ to the first collider 
	which is not in $\an^\texttt{+}(C\setminus \antv(B))$. This collider is in 
	$\an^\texttt{+}(\antv(B) \cap C)$ and we concatenate this subpath with the 
	directed path from the collider to $\antv(B)\cap C$ if the collider is not 
	itself in $\antv(B)\cap C$. This path is $\delta$-connecting from $A$ to 
	$(\antv(B) \cap C) \cup B$ given $C\setminus \antv(B)$. In conclusion, if 
	$A \not\rightarrow_\delta (\antv(B) \cap C) \cup B \mid C \setminus 
	\antv(B)$, then $\bar{\nu}_{t-1}^A$ and $\nu_t^B$ are $d$-separated given 
	$\bar{\nu}_{t-1}^{B\cup C}$ in $\bar{D}_t$ and therefore also in 
	${D}_{t'}$. The result follows from the $d$-separation Markov property of 
	the unrolled graph.
\end{proof}

We will use the next lemma to show how $\delta$-separation in an extended local 
independence graph may also represent the assumptions needed for a mediation 
analysis. We say that a node set $E\subseteq V$ is \emph{tail ancestral} if 
there is no $i\in V\setminus E$ and $j \in E$ such that $i \tailedrightarrow j$.

\begin{corollary}
	Let $(D,D_{t'})$ be a proper pair and let $A,C \subseteq V\cup W$ and 
	$B\subseteq V$ be disjoint. Assume that there are no tailed edges out of 
	$B$, that $\pat(B)\subseteq E$ and that $E\subseteq C$. If $B$ is 
	$\delta$-separated from $A$ given $C$ in $D$, then $\nu_t^B$ is 
	$d$-separated from $\bar{\nu}_{t-1}^A$ given $\bar{\nu}_{t-1}^{B\cup C} 
	\cup \nu_t^{E}$ in $D_{t'}$.
	\label{cor:extDeltasepToDsep1}
\end{corollary}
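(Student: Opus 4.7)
My plan is to deduce the corollary directly from Lemma \ref{lem:extDeltasepToDsep}. The lemma imposes four conditions on $E$: (a) $E$ is tail ancestral, (b) $\deta(B) \cap E = \emptyset$, (c) $E \subseteq C$, and (d) $\pat(B) \subseteq E$. Conditions (c) and (d) are assumed directly in the corollary. For (b), the assumption that there are no tailed directed edges out of $B$ immediately yields $\deta(B) = \emptyset$, so $\deta(B) \cap E = \emptyset$ holds trivially. Hence three of the four hypotheses are free, and the bulk of the work is the remaining condition (a).

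To secure (a), I would pass from $E$ to its tail-ancestral closure within $V$, namely $E^\ast := E \cup \bigl(\ant(E) \cap V\bigr)$, which is by construction the smallest tail-ancestral set containing $E$. Since tail ancestors of $E$ are not tail descendants of $B$ (because $\deta(B) = \emptyset$), we still have $\deta(B) \cap E^\ast = \emptyset$, and $\pat(B) \subseteq E \subseteq E^\ast$. Applying Lemma \ref{lem:extDeltasepToDsep} with $E^\ast$ in place of $E$ would produce the $d$-separation of $\nu_t^B$ from $\bar{\nu}_{t-1}^A$ given $\bar{\nu}_{t-1}^{B\cup C} \cup \nu_t^{E^\ast}$ in $D_{t'}$.

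The final step is to reconcile the conditioning set $\nu_t^{E^\ast}$ with the corollary's $\nu_t^E$. This I expect to be the main technical obstacle, as in general a tail ancestor of $E \subseteq C$ need not itself lie in $C$; one therefore either (i) argues that in the intended usage $C$ is already tail-ancestrally closed, so $E^\ast \subseteq C$ and the extra nodes $\nu_t^{E^\ast\setminus E}$ are harmless because they sit at contemporaneous noncollider positions on any path from $\bar\nu_{t-1}^A$ to $\nu_t^B$, or (ii) interprets the corollary as implicitly assuming $E$ is tail ancestral from the start, in which case the step is vacuous. Under either reading, combining the three directly-verified hypotheses with the tail-ancestral passage completes the reduction to the lemma and yields the corollary.
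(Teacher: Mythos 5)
You reduce the corollary to Lemma \ref{lem:extDeltasepToDsep}, which is exactly what the paper does (its proof is a one-line appeal to that lemma), and your verification of three of the lemma's hypotheses is correct: $\pat(B)\subseteq E$ and $E\subseteq C$ are assumed, and ``no tailed edges out of $B$'' gives $\deta(B)=\emptyset$, hence $\deta(B)\cap E=\emptyset$. The gap is in your treatment of the remaining hypothesis, tail ancestrality of $E$. Passing to the closure $E^\ast=E\cup(\ant(E)\cap V)$ does not deliver the corollary: besides the fact that $E^\ast\subseteq C$ need not hold, the real problem is the step back from $\nu_t^{E^\ast}$ to $\nu_t^{E}$, and your branch (i) gets it backwards --- the extra nodes of $E^\ast\setminus E$ occur as \emph{conditioned noncolliders} at lag $t$, so deleting them from the conditioning set is precisely what can open paths. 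A concrete example: let $W=\{A\}$, $V=\{B,K,E_0,P\}$, with tailed edges $K\tailedrightarrow E_0\tailedrightarrow B$ and directed edges $A\rightarrow K$, $P\rightarrow E_0$, $P\rightarrow B$; take $C=\{K,E_0\}$ and $E=\pat(B)=\{E_0\}$. All of the corollary's stated hypotheses hold, and $B$ is $\delta$-separated from $A$ given $C$ (every path from $A$ to $B$ passes the noncollider $K\in C$). Here $C$ is even tail-ancestrally closed, so $E^\ast=\{E_0,K\}\subseteq C$ and the lemma does apply with $E^\ast$; yet for $t\geq 2$ the path $\nu_0^A\rightarrow\nu_t^K\rightarrow\nu_t^{E_0}\leftarrow\nu_{t-1}^P\rightarrow\nu_t^B$ is $d$-connecting given $\bar{\nu}_{t-1}^{B\cup C}\cup\nu_t^{E}$, because $\nu_t^K$ is not in this conditioning set while the collider $\nu_t^{E_0}$ is. So the conclusion with $\nu_t^{E}$ genuinely fails: no argument can bridge your reconciliation step, and the statement is false as literally written unless $E$ is tail ancestral.

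Consequently the only tenable reading is your branch (ii): the corollary silently inherits the lemma's hypothesis that $E$ is tail ancestral. This is in effect what the paper does --- its proof invokes the lemma verbatim, and in the intended application (Proposition \ref{contempAssump}) one takes $E=\pat(B)=\{N\}$, where $N$ has no tailed parents, so tail ancestrality holds automatically. Under that reading your verification of the remaining hypotheses is complete and coincides with the paper's argument; as written, however, the closure-and-then-drop step is a genuine gap, and the justification offered for it in branch (i) is incorrect.
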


\begin{proof}
	This follows directly from Lemma \ref{lem:extDeltasepToDsep}.
\end{proof}

\begin{corollary}
	Let $(D,D_{t'})$ be a proper pair and let $A,C \subseteq V\cup W$ and 
	$B\subseteq V$ be disjoint. Assume that there are no tailed edges into $B$. 
	If $B$ is $\delta$-separated from $A$ given $C$ in $D$, then $\nu_t^B$ is 
	$d$-separated from $\bar{\nu}_{t-1}^A$ given $\bar{\nu}_{t-1}^{B\cup C}$ in 
	$D_{t'}$.
	\label{cor:extDeltasepToDsep2}
\end{corollary}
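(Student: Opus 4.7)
The plan is to derive this corollary as a direct specialisation of Lemma \ref{lem:extDeltasepToDsep}. The hypothesis that there are no tailed edges into $B$ is exactly the statement that $\pat(B) = \emptyset$, which makes the empty set a valid choice for the auxiliary tail ancestral set $E$ appearing in the lemma. All the side conditions on $E$ then trivialise, and the extra nodes $\nu_t^E$ that appear in the lemma's conditioning set disappear, leaving precisely the conditioning set of the corollary.

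More concretely, I would first set $E = \emptyset$ and verify the four hypotheses of Lemma \ref{lem:extDeltasepToDsep}: (i) $E$ is tail ancestral, since there are no nodes in $E$ that could have a tail ancestor outside $E$; (ii) $\deta(B) \cap E = \emptyset$ holds vacuously; (iii) $E \subseteq C$ holds vacuously; (iv) $\pat(B) \subseteq E$ holds because $\pat(B) = \emptyset$ by assumption. Having checked these, I would invoke the lemma: from $\delta$-separation of $B$ from $A$ given $C$ in $D$, we obtain that $\nu_t^B$ is $d$-separated from $\bar{\nu}_{t-1}^A$ given $\bar{\nu}_{t-1}^{B\cup C} \cup \nu_t^E$ in $D_{t'}$ for every $0 < t \leq t'$. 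Since $\nu_t^E = \emptyset$, this conditioning set coincides with $\bar{\nu}_{t-1}^{B\cup C}$, which is the statement of the corollary.

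There is essentially no obstacle here; the corollary is a packaging result whose role is to strip away the bookkeeping about contemporaneous tailed parents of $B$ that Lemma \ref{lem:extDeltasepToDsep} had to carry. The only thing worth double-checking is the interpretation of $\pat(B)$ under the convention $\pat(B) \cap B = \emptyset$ stated earlier: the assumption \emph{no tailed edges into $B$} means no edge $i \tailedrightarrow j$ with $j \in B$ and $i \in V \cup W$, which under that convention is exactly $\pat(B) = \emptyset$, so the reduction goes through cleanly.
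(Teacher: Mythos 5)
Your proposal is correct and matches the paper's own proof, which likewise obtains the corollary by applying Lemma \ref{lem:extDeltasepToDsep} with $E = \emptyset$ (the paper also notes an alternative route via the proof of Theorem \ref{thm:markovContemp}, but the lemma-based specialisation is the primary one). Your only imprecision is immaterial: under the convention $\pat(B)\cap B=\emptyset$, ``no tailed edges into $B$'' is slightly stronger than $\pat(B)=\emptyset$, but the implication you actually need, namely $\pat(B)\subseteq E=\emptyset$, holds, so the reduction goes through exactly as you describe.
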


\begin{proof}
	This follows from Lemma \ref{lem:extDeltasepToDsep} with $E = \emptyset$. 
	Alternatively, this follows from the proof of Theorem 
	\ref{thm:markovContemp} by noting that $\ant(B)=\emptyset$.
\end{proof}

\begin{proof}[Proof of Proposition \ref{contempAssump}]
	Fix $t>0$. We first note that when $B$ is $\delta$-separated from $A$ given 
	$C$ in $D$, then the same is true in the rolled version of $D_t$, 
	$\bar{D}$, as every edge in this graph is also in $D$. For $t_k > 0$, the 
	statements concerning A1 and A3 follow from Corollary 
	\ref{cor:extDeltasepToDsep1} using the pair $(\bar{D}, D_t)$. The statement 
	concerning A2 follows from Corollary \ref{cor:extDeltasepToDsep2}. For $t_k 
	= 0$, the statement follows directly from the causal graph as there are no 
	contemporaneous edges at $A^D$ or at $A^M$.
\end{proof}

\begin{remark}
	The proof of Theorem \ref{thm:markovContemp} uses that if $(\ant(B)\cap 
	C\cap V) \cup B$ is $\delta$-separated from $A$ given $C \setminus (\ant(B) 
	\cap V)$, then $\bar{\nu}_{t-1}^A$ and $\nu_t^B$ are $d$-separated given 
	$\bar{\nu}_{t-1}^C$ in the unrolled graph $\bar{D}_t$. The opposite 
	statement does not hold without further restrictions. As an example of 
	this, one can look at the graph $F_1 \tailedleftarrow F_2 \leftarrow F_3 
	\rightarrow F_4$ and sets $A = \{F_1\}$, $B = \{F_3\}$, $C=\emptyset$ where 
	$F_1$ corresponds to a baseline variable.
	
	If $A\cap \ant(B) = \emptyset$ and $A \not\rightarrow_\delta (\ant(B) \cap 
	C) \cup B \mid C \setminus \ant(B)$, then the result also holds. However, 
	this is a weaker formulation as this condition implies the condition in the 
	theorem. Moreover, in the rolled graph $D$ on nodes $\{A,B,C_1,C_2\}$ such 
	that $A$, $C_1$ and $C_2$ represent baseline variables and with edges $A 
	\tailedrightarrow C_1 \tailedleftarrow C_2 \tailedrightarrow B$ as well as 
	$C_1\tailedrightarrow B$, the condition of the theorem holds while the 
	above notion of separation does not hold.
	\label{rem:thm}
\end{remark}

\section{A continuous-time example}
\label{ssec:continuousTimeExample}

\begin{figure}
	\begin{subfigure}{.3\textwidth}
		\begin{tikzpicture}
		\begin{scope}[every node/.style={thick,draw=none}]
		\node (A) at (0,0) {$A$};
		\node (M) at (2,1) {$M$};
		\node (D) at (2,-1) {$D$};
		\node (L) at (4,2) {$L$};
		\node (U) at (4,0) {$U$};
		\end{scope}
		
		\begin{scope}[>={Stealth[black]},
		every node/.style={fill=white,circle},
		every edge/.style={draw=black,very thick}]
		\path [->] (A) edge (M);
		\path [->] (A) edge (D);
		\path [->] (M) edge[bend left = 10] (D);
		\path [->] (L) edge (M);
		\path [->] (L) edge (D);
		\path [->] (U) edge (L);
		\path [->] (U) edge (D);
		\end{scope}
		\end{tikzpicture}
	\end{subfigure}\hfill
	\begin{subfigure}{.3\textwidth}
		\begin{tikzpicture}
		\begin{scope}[every node/.style={thick,draw=none}]
		\node (Am) at (2,1) {$A_{M}$};
		\node (Ad) at (2,-1) {$A_{D}$};
		\node (M) at (4,1) {$M$};
		\node (D) at (4,-1) {$D$};
		\node (L) at (6,2) {$L$};
		\node (U) at (6,0) {$U$};
		\end{scope}
		
		\begin{scope}[>={Stealth[black]},
		every node/.style={fill=white,circle},
		every edge/.style={draw=black,very thick}]
		\path [->] (Ad) edge (D);
		\path [->] (Am) edge (M);	
		\path [->] (M) edge[bend left = 10] (D);
		\path [->] (L) edge (M);
		\path [->] (L) edge (D);
		\path [->] (U) edge (L);
		\path [->] (U) edge (D);
		\end{scope}
		\end{tikzpicture}
	\end{subfigure}
	\caption{Example graphs representing the linear Hawkes example. Left: graph 
	representing the observational distribution. Right: hypothetical 
	intervention. Process $U$ is unobserved.}
	\label{Figure:hawkes}
\end{figure}
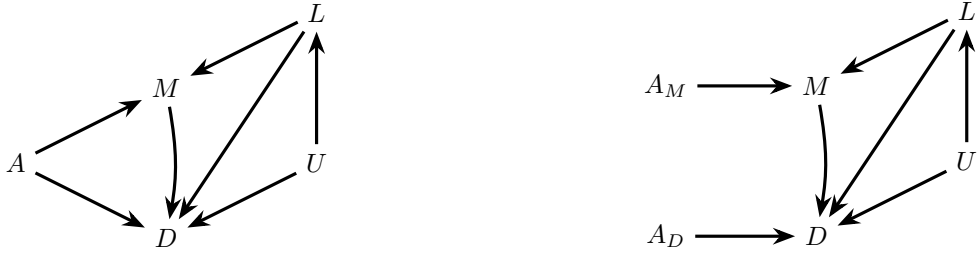

We now illustrate how local independence graphs may also represent sufficient 
assumptions for mediation analysis in continuous-time stochastic processes. For 
this purpose, we consider a multivariate \emph{linear Hawkes process} 
\citep{hawkes1971point}. We will first give some background and explain the 
result. In Subsection \ref{sssec:hawkesDetails}, we will then provide details 
of how direct and mediated effects can be defined and identified in this 
setting.

A linear Hawkes process is a multivariate point process evolving in continuous 
time (see also the example in Figure \ref{fig:hawkesExample}). We let $T_k^i$ 
denote the event times in the $i$'th process. We let $N = (N^1,\ldots,N^n)$ 
denote the corresponding counting processes, $N_t^i = \sum_{k} 
\mathds{1}_{T_k^i\leq t}$. The distribution of an $n$-dimensional point process 
is described by its conditional intensities. Heuristically, the conditional 
intensity of process $i$, $\lambda_t^i$, is the limiting probability of 
observing an event in process $i$ in the infinitesimal interval $(t, t+h]$ 
given the past of the process until time point $t$. A point process is a 
\emph{linear Hawkes process} if for each $i = 1,\ldots,n$ and $t \in 
\mathbb{R}$ the conditional intensity of process $i$, satisfies

$$
\lambda_t^i = \mu_i + \sum_{j=1}^n \int_{-\infty}^t g_{ij}(t-s) \mathrm{d}N_s^j 
=  \mu_i + \sum_{j=1}^n \sum_{T_k^j < t} g_{ij}(t - T_k^j)
$$

\noindent where $\mu_i$ are nonnegative constants and $g_{ij}$ are nonnegative 
functions. We see that the intensity of process $i$ at time $t$ has a 
contribution from each $j$-event occurring prior to time $t$ when $g_{ij}(t - 
T_k^j)$ is nonzero. A linear Hawkes process models recurrent events and there 
may therefore be multiple events in both treatment, mediation and outcome 
processes, see example data in Figure \ref{fig:hawkesExample} (left).

The connections between the coordinate processes are described by functions 
$g_{ij}$. We define a matrix $G$ such that $G_{ij} = \int_{-\infty}^\infty 
g_{ij}(t) \mathrm{d}t$ which will provide us with scalar parameters that we 
will identify in the mediation analysis. Clearly, $G_{ij}$ is a measure of the 
strength of the direct influence from process $j$ to process $i$ as $g_{ij}$ is 
a nonnegative function. Note that $G_{ij} = 0$ implies $g_{ij} = 0$ if $g_{ij}$ 
is continuous. We assume that the \emph{spectral radius} of $G$, that is, the 
largest absolute value of the eigenvalues of $G$, is strictly less than one. 
This also allows us to assume that the process is stationary. We assume that 
$G$ is \emph{normalised} (see Subsection \ref{sssec:hawkesDetails}).   We can 
define a graph from a Hawkes process such that $j\rightarrow i$ is in the graph 
if and only if $G_{ij} \neq 0$ and we say that this is the \emph{causal graph}. 
This is also a local independence graph in the sense that $\delta$-separation 
implies local independence \citep{didelez08, mogensen2020causal} 

The matrix $G$ has a useful interpretation which is most easily explained using 
the \emph{cluster representation} of the linear Hawkes process. We introduced 
the linear Hawkes process using its intensities above, however, one may give an 
equivalent definition as follows. For each $j$ we generate generation-0 events 
from a Poisson process with rate $\mu_j$. Each generation-0 event creates a 
\emph{cluster} of future events recursively. From a generation-0 event of type 
$j$ at time $s$, we generate for all $i = 1,\ldots,n$ generation-1 events at 
times $t>s$ from an inhomogeneous Poisson process with rate $g_{ij}(t-s)$ for 
$t > s$. From generation-1 events, generation-2 events are generated 
analogously and so forth. This generates a single \emph{cluster}. By combining 
the event types and event times of all clusters, we obtain a linear Hawkes 
process, though the cluster structure itself is unknown.

We see that there is a simple parent-child relation in the data generation, 
even though this is not observed in the data, see Figure 
\ref{fig:hawkesExample} where parent and child events are joined by a line 
segment (the parent event always occurs before the child event). The number of 
expected direct $i$-descendants of a $j$-event is simply $G_{ij}$ 
(\emph{direct} means that if the parent event is in the $n$'th generation in a 
cluster, then the child event is in the $(n+1)$'th generation, see Figure 
\ref{fig:hawkesExample} for an illustration). If we define $R = (I - G)^{-1} = 
\sum_{k = 0}^\infty G^k$, then $R_{ij}$ is the total expected number of 
$i$-events in a cluster rooted at a $j$-event \citep{jovanovic2015cumulants}. 
Note that this matrix is well-defined due to the assumption on the spectral 
radius on $G$. We will say that a linear Hawkes process is \emph{causal} if the 
cluster generated by an injected (interventional) event has the same 
distribution as an intrinsic (noninterventional) event 
\citep{mogensen2022equality}.

We now turn to the specific model in Figure \ref{Figure:hawkes} with a 
treatment process, $A$, a mediator process, $M$, an outcome process $D$, a 
`covariate' process, $L$, and an unobserved confounder process, $U$. Note that, 
in contrast to previous sections, $N$ denotes the multivariate process and the 
outcome process is denoted by $D$. In line with the above discussion, the 
mediation analysis consists of identifying the number of expected outcome 
events created by injection of a treatment ($A$) event mediated through $M$ and 
directly. Using the causal assumption, this amounts to identifying the relevant 
entries of $G$. We can imagine manipulating the system such that we can inject 
an event in the treatment process which only propagates directly to the 
treatment, or only to the mediator (see Figure \ref{Figure:hawkes}, right). The 
expected number of outcome events in this manipulated cluster are the direct 
effect and the mediated effect, respectively.

The parameter $R_{DA}$ is the expected number of $D$-events on cluster rooted 
at an $A$-event. We have $R = \sum_{i=0}^\infty G^i$ and it follows that

$$
R_{DA} = G_{DA} + G_{DM}G_{MA},
$$

\noindent using the fact that $G$ is normalised and therefore has zeros on the 
diagonal. We see that the parameter $G_{DA}$ is the expected number of 
$D$-events directly from an $A$-event. On the other hand, $G_{DM}G_{MA}$ is the 
expected number of $D$-events on an $A$-cluster that are mediated through an 
$M$-event. This means that if we were to intervene and inject an event into 
$A^M$ only, this would on average create $G_{DM}G_{MA}$ $D$-events. On the 
other hand, if we were to inject an event into $A^D$ only, this would on 
average create $G_{DA}$ $D$-events. Therefore, we may think of $R_{DA}$ as a 
total effect from $A$ to $D$ while $G_{DA}$ is the direct effect from $A$ to 
$D$ and  $G_{DM}G_{MA}$ is the effect from $A$ to $D$ mediated by $M$. In the 
next subsection, we show that both of these quantities are identified from the 
observational distribution.

\subsection{Details}
\label{sssec:hawkesDetails}

We define a matrix, $C$, such that

$$
C_{ij}dt = \int_{-\infty}^\infty E(dN_t^idN_{t+\tau}^j) - 
E(dN_t^id)E(dN_{t+\tau}^j)\mathrm{d}\tau
$$

\noindent and we say that $C$ is the \emph{integrated covariance}. We will 
argue that the mediated and direct effects from $A$ to $D$ are in fact 
identified from the observed integrated covariance of the linear Hawkes process 
model corresponding to Figure \ref{Figure:hawkes} (left).
We assume that the parameter matrix $G$ is \emph{normalised}, i.e., has zeros 
on the diagonal. A normalised representation can always be obtained and the 
normalised parameters have a simple interpretation in that a normalised 
parameter $G_{ij}$ represents the expected number of direct $i$-events from an 
$j$-event counting all subsequent direct self-events, that is, all $i$-events 
on subclusters of the type $j - i - i - \ldots - i$ 
\citep{mogensen2022equality}.

\begin{figure}
	
	\begin{subfigure}{0.45\textwidth}
		\centering
		\includegraphics[width=\textwidth]{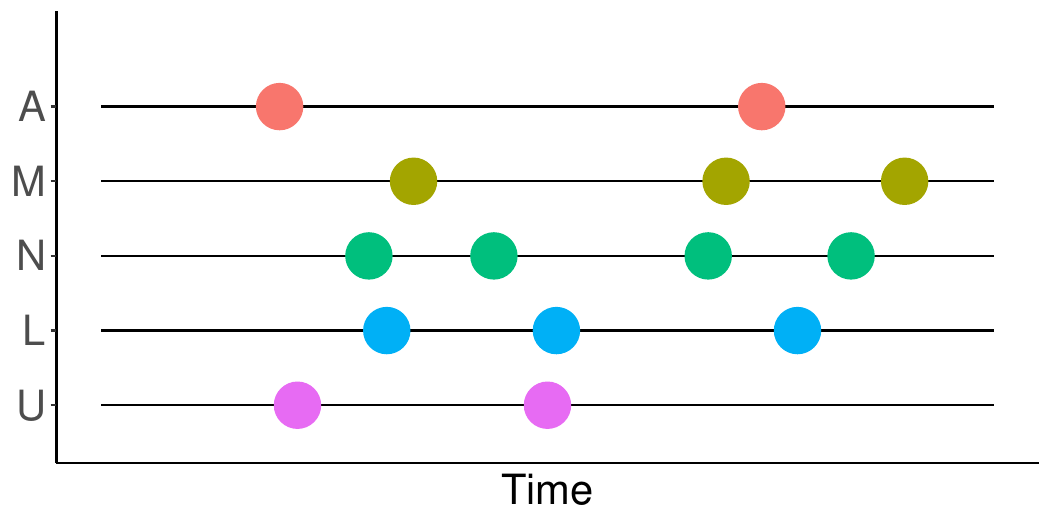}	
		\end{subfigure}\hspace{.05\textwidth}%
	\begin{subfigure}{0.45\textwidth}
		\centering
		\includegraphics[width=\textwidth]{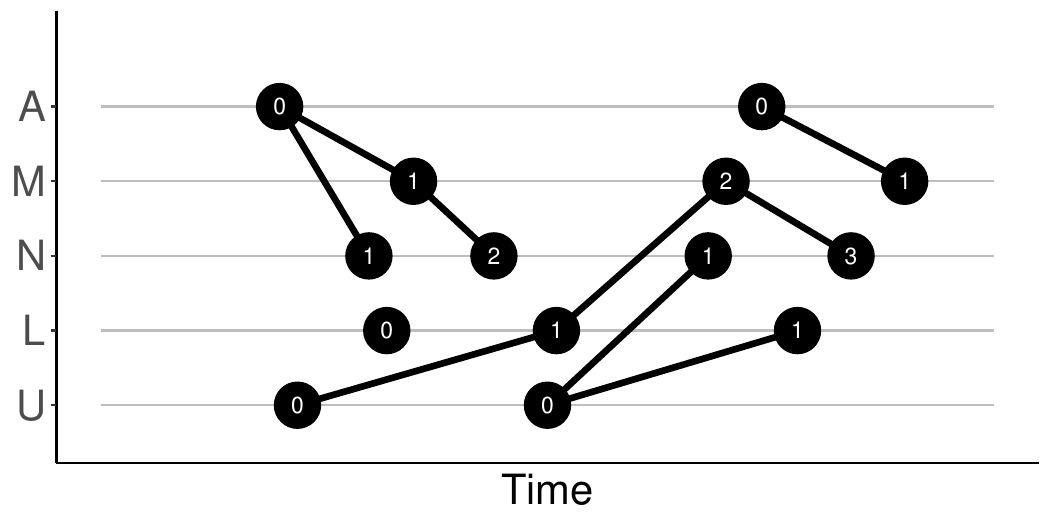}
	\end{subfigure}
	\caption{\label{fig:hawkesExample} Example of data from the linear Hawkes 
	example. The observed data contain pairs $(t_k,i_k)$ where $t_k$ is a time 
	point and $i_k$ is the coordinate process in which the event occurs. On the 
	left, vertical placement and colour illustrate the coordinate process. A 
	linear Hawkes process can be generated as an offspring process. 
	Generation-0 events are generated. A generation-0 event at time $t$ of type 
	$j$ creates generation-1 events of type $i$ corresponding to an 
	inhomogeneous Poisson process with rate $g_{ij}$ started at time $t$. 
	Analogously, a generation-1 event at time $t$ of type $j$ creates 
	generation-2 events of type $i$ corresponding to an inhomogeneous Poisson 
	process with rate $g_{ij}$ started at time $t$, and so forth. On the right, 
	the cluster structure is also plotted, though this is not observed in the 
	data (numbers indicate generation within the cluster).}
\end{figure}

\begin{proposition}
	Consider a stationary linear Hawkes process such that the spectral radius 
	of $G$ is strictly less than one and such that $E(\lambda_t^i) > 0$ for 
	each $i$. When the graph in Figure \ref{Figure:hawkes} (left) represents 
	the model and processes $O = \{A,M,D,L\}$ are observed, then both the 
	(normalised) direct effect from $A$ to $D$ and the (normalised) effect from 
	$A$ to $D$ mediated by $M$ are identified from the observed integrated 
	covariance.
\end{proposition}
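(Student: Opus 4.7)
The plan is to use the classical covariance formula for a stationary linear Hawkes process,
\begin{equation*}
C \;=\; R\,\Lambda\, R^T, \qquad R \;=\; (I-G)^{-1} \;=\; \sum_{k\geq 0} G^k,
\end{equation*}
where $\Lambda = \mathrm{diag}(\Lambda_1,\ldots,\Lambda_n)$. This identity can be obtained either from the Bartlett spectrum of the cluster process at frequency zero, or directly from the Poisson cluster representation together with the fact that $R_{ij}$ is the expected number of $i$-events in a cluster rooted at a $j$-event. After recording this identity I would read off the entries of $R$ from the graph in Figure \ref{Figure:hawkes} (left) and then expand the relevant entries of $C$.

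From the graph the nonzero entries of $G$ are $G_{MA}, G_{ML}, G_{DA}, G_{DM}, G_{DL}, G_{DU}, G_{LU}$, and since the induced DAG has depth at most three the series truncates as $R = I+G+G^2+G^3$. The explicit entries I need are $R_{MA}=G_{MA}$, $R_{ML}=G_{ML}$, $R_{MU}=G_{ML}G_{LU}$, $R_{DA}=G_{DA}+G_{DM}G_{MA}$, $R_{DM}=G_{DM}$, $R_{DL}=G_{DL}+G_{DM}G_{ML}$, $R_{DU}=G_{DU}+G_{DL}G_{LU}+G_{DM}G_{ML}G_{LU}$, $R_{LU}=G_{LU}$; moreover the $A$- and $L$-rows of $R$ vanish outside these entries and the diagonal because $A$ has no ancestors in $G$ and $L$ only has $U$. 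Computing $C_{ij}=\sum_k R_{ik}\Lambda_k R_{jk}$ and exploiting this sparsity directly identifies $\Lambda_A = C_{AA}$, $G_{MA} = C_{AM}/C_{AA}$, $G_{ML} = C_{ML}/C_{LL}$, and the total effect $R_{DA} = G_{DA}+G_{DM}G_{MA} = C_{AD}/C_{AA}$. The mean intensity $\Lambda_M$ is available from the first-order moments (or, equivalently, from $C_{MM}$ and the quantities already recovered, using $C_{MM}=\Lambda_M + G_{MA}^2 C_{AA} + G_{ML}^2 C_{LL}$).

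The main obstacle is separating the direct effect $G_{DA}$ from the mediated effect $G_{DM}G_{MA}$, which requires identifying $G_{DM}$ despite the unobserved confounder $U$ acting jointly on $L$ and $D$. My plan is to expand $C_{MD}$, collect the $G_{DM}$-terms (whose coefficient is $\Lambda_M + G_{MA}^2 C_{AA} + G_{ML}^2 C_{LL}$), then rewrite the remaining $U$-dependent terms by means of $C_{DL}=(G_{DL}+G_{DM}G_{ML})C_{LL}+G_{DU}G_{LU}\Lambda_U$ and the remaining $G_{DA}$-terms by means of $C_{AD}=\Lambda_A(G_{DA}+G_{DM}G_{MA})$. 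After simplification the algebra collapses to the clean identity
\begin{equation*}
C_{MD} \;=\; G_{MA}\, C_{AD} \;+\; G_{ML}\, C_{DL} \;+\; G_{DM}\, \Lambda_M,
\end{equation*}
from which $G_{DM}$ is identified, and with it the mediated effect $G_{DM}G_{MA}$; the direct effect then follows from $G_{DA} = C_{AD}/\Lambda_A - G_{DM}G_{MA}$. Conceptually, the cancellation of all $U$-dependent quantities ($\Lambda_U, G_{LU}, G_{DU}, G_{DL}$) on the right-hand side reflects the fact that the confounding path through $U$ reaches $D$ only via $L$ and never via $M$, so every $U$-dependent contribution to $C_{MD}$ is forced to factor through the observable $C_{DL}$ or $C_{AD}$; an added edge $U\to M$ would spoil exactly this cancellation.
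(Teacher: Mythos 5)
Your proposal is correct: I checked the entries of $R=I+G+G^2+G^3$ you read off the graph, your expression for $C_{DL}$, and the key identity $C_{MD}=G_{MA}C_{AD}+G_{ML}C_{DL}+G_{DM}\Lambda_M$; all hold, and together with $\Lambda_M=C_{MM}-G_{MA}^2C_{AA}-G_{ML}^2C_{LL}>0$ they identify $G_{DM}$, hence the mediated effect $G_{DM}G_{MA}$ and the direct effect $G_{DA}=C_{AD}/C_{AA}-G_{DM}G_{MA}$. Your route differs from the paper's in how the latent process $U$ is handled. The paper never works with the full five-dimensional factorization: it invokes the marginalized representation $C_{OO}=(I-G_{OO})^{-1}\Theta(I-G_{OO})^{-T}$ from \cite{mogensen2022equality}, in which $U$ is absorbed into a noise matrix $\Theta$ that is diagonal except in the $(D,L)$ positions, notes that $R_{OO}=(I-G_{OO})^{-1}$ is triangular with unit diagonal, and then argues sequential solvability of the $4\times 4$ system (recovering $\Theta_{AA}$, $R_{MA}$, $R_{DA}$, $\Theta_{LL}$, $R_{ML}$, $\Theta_{MM}$, $R_{DM}$) before obtaining $G_{MA},G_{DA},G_{DM}$ by inverting $R_{\bar O\bar O}$ with $\bar O=\{A,M,D\}$. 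You instead keep $U$ explicit, use nilpotency of $G$ to write $R$ in closed form, and show by direct algebra that every $U$-dependent contribution to $C_{MD}$ factors through the observable $C_{DL}$ or $C_{AD}$. The paper's argument is more modular (the marginalization lemma does the confounder bookkeeping once, and no explicit form of $R$ is needed), whereas yours yields explicit closed-form identification formulas and makes transparent exactly why the confounding is harmless here (it reaches $D$ only via $L$, never via $M$) and what would destroy identification (an edge from $U$ to $M$). Two small points to tighten: state a reference for the factorization $C=R\Lambda R^T$ (Bartlett spectrum at frequency zero, or the cluster representation), noting that under the paper's normalisation the diagonal matrix equals the mean intensities only in the absence of self-excitation --- harmless for you, since you only use positivity of the diagonal and recover its entries from $C$ itself; and prefer the $C_{MM}$-route to $\Lambda_M$ over first-order moments, since the proposition asks for identification from the integrated covariance alone.
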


\begin{proof}
	We can write the observable part of the integrated covariance as
	
	$$
	C_{OO} = (I - G_{OO})^{-1}\Theta (I - G_{OO})^{-T}
	$$
	
	\noindent where $\Theta$ is a positive definite matrix such that 
	off-diagonal entries are zero except for $\Theta_{34}$ and $\Theta_{43}$ 
	\citep{mogensen2022equality}. One can show that $R_{OO}=(I - G_{OO})^{-1}$. 
	Theory on linear structural equation models give that $G_{MA}$, $G_{DA}$ 
	and $G_{DM}$ are all generically identified from $C_{OO}$, that is, except 
	for choices of parameters $G_{OO}$ and $\Theta_{OO}$ of measure zero 
	\citep{weihs2018determinantal}. We show directly that they are in fact 
	always identified using that the diagonal of $\Theta$ is strictly positive. 
	
	The matrix equation above is $4 \times 4$. Note that $R_{ji} = 0$ if there 
	is no directed path from $i$ to $j$ and $i\neq j$. The matrix $I - G$ is 
	upper triangular (after re-arranging the rows and columns) and therefore 
	$R$ is as well. Using $I = R(I-G)$ gives that $R$ must have ones on the 
	diagonal. The following is a sketch of how to solve the equations to 
	sequentially identify the parameters we need. From the $C_{AA}$-equation, 
	we can identify $\Theta_{AA}$. From the $C_{MA}$-equation, we can then 
	identify $R_{MA}$ and from the $C_{DA}$-equation we can identify $R_{DA}$. 
	From $C_{LL}$, we identify $\Theta_{LL}$. From the $C_{LM}$-equation we 
	identify $\Theta_{LL}R_{ML}$ and then $R_{ML}$. From the $C_{MM}$-equation 
	we identify $\Theta_{MM}$. From the $C_{DM}$-equation we identify $R_{DM}$. 
	Then finally, we can identify $G_{MA}, G_{DA}, G_{DM}$ from the inverting 
	the matrix $R_{\bar{O}\bar{O}}$ where $\bar{O} = \{A,M,D\}$. This matrix is 
	invertible since it is triangular and has ones on its main diagonal. 
\end{proof}

\bibliography{papers6}

\end{document}